\documentclass[a4paper,fleqn]{cas-dc}

\usepackage[square, comma, sort&compress, numbers]{natbib}
\usepackage{hyperref}
\def\tsc#1{\csdef{#1}{\textsc{\lowercase{#1}}\xspace}}
\tsc{WGM}
\tsc{QE}
 
\newtheorem{theorem}{Theorem}
\newtheorem{definition}{Definition}
\newtheorem{corollary}{Corollary}

\newdefinition{remark}{Remark}
\newdefinition{proposition}{Proposition}
\newproof{proof}{Proof}
\newproof{pot}{Proof of Theorem \ref{thm}}

\allowdisplaybreaks[4]
\usepackage{amsmath}
\usepackage{cuted} 
\begin{document}
\let\WriteBookmarks\relax
\def\floatpagepagefraction{1}
\def\textpagefraction{.001}

\shorttitle{Evolutionary dynamics with frequency-dependent returns}    

\shortauthors{Wang et~al.}  

\title [mode = title]{Evolutionary dynamics in public goods games with general  frequency-dependent returns}  



%

\author[1]{Qiushuang Wang}
\author[1]{Yuyuan Liu}
\author[1]{Xiaojie Chen}
\cormark[1]
\ead{xiaojiechen@uestc.edu.cn}

\author[2]{Attila Szolnoki}

\affiliation[1]{organization={School of Mathematical Sciences, University of Electronic Science and Technology of China},
	city={Chengdu},
	postcode={611731}, 
	country={People's Republic of China}}

\affiliation[2]{organization={Institute of Technical Physics and Materials Science, Centre for Energy Research},
	addressline={P.O. Box 49}, 
	city={Budapest},
	postcode={H-1525}, 
	country={Hungary}}

\cortext[cor1]{Corresponding author}


\begin{abstract}
	The public goods game serves as a significant paradigm for investigating the emergence and maintenance of cooperation in conflicting situations. In the traditional public goods game, the multiplication factor characterizing the synergy effect of common efforts is typically assumed to be constant. In real-world scenarios, however, investment returns are often dynamic and vary with the strategic composition of the interaction group. To date, the evolutionary dynamics of the public goods game with such frequency-dependent returns have remained not fully understood. In this work, we introduce a general frequency-dependent multiplication factor that depends on the strategy composition within the game group. Through theoretical analysis, we derive the mathematical conditions under which cooperation is favored. Our results show that whether cooperation has an evolutionary advantage over defection depends on the investment return rate in the full-contribution state of the game group, irrespective of the return rates in other states. An increase in this rate leads to a higher abundance of cooperators. Furthermore, we introduce a general frequency-dependent multiplication factor into the public goods game with peer punishment, and systematically explore its effects on the cooperation dilemma and the second-order free-rider problem. Our results highlight that the abundance of cooperators or punishers is governed solely by the investment return values in the full-cooperation and full-punishment compositions of the group. A higher return rate in the full-cooperation state facilitates the promotion of cooperation, whereas a higher return rate in the full-punishment state favors the emergence of punishment. Our theoretical findings are verified by individual-based simulations.
\end{abstract}
%

\begin{keywords}
	Evolutionary game theory \sep social dilemmas\sep public goods games \sep frequency-dependent returns \sep Markov process
\end{keywords}

\maketitle

\section{Introduction}\label{Xsec1-1}\label{Xsec1-1}
Cooperation is a widespread phenomenon both in  nature and in human society. Understanding how cooperation emerges and evolves among self-interested individuals is a central question of great scientific importance~\cite{FehrNature2003}. Evolutionary game theory provides a powerful framework for studying the evolution of cooperation~\cite{AllenNature2017,HofbauerCambridgeUP1998,NowakHarvardUP2006,NowakScience2006,LeePD2025,OhtsukiNature2006,PalNC2022,TanimotoEESCS2018,PeiPD2024,LuoPD2021}. In real world settings, interactions often involve multiple individuals, and the public goods game has become a popular paradigm for studying cooperation in such multi-player interactions~\cite{ChenNJP2015,HauertJTB2002,LiuMMMAS2019,MilinskiPNAS2006,SantosNature2008,SzolnokiJTB2013}. In the classical public goods game, individuals choose between two options whether to cooperate or defect in a common venture. Cooperators bear personal costs to contribute to the public pool, while defectors contribute nothing. Then, all contributions are multiplied by a multiplication factor that governs the return of investments, whereupon the resultant total is distributed equally among all participants. As a result, defectors obtain higher payoffs than cooperators, being free-riders. Although cooperation enhances the overall payoff of the group, defection is the optimal strategy for rational individuals. This leads to a social dilemma, implying that in the context of evolutionary game theory, without additional mechanisms, natural selection disfavors the evolution of cooperation~\cite{HardinScience1968,TkadlecPNAS2023}.

In recent years, considerable research efforts have been devoted to addressing the problem of cooperation in the public goods game~\cite{ChenJRSI2015,GurerkScience2006,HauertJTB2010,Liuelife2023,RandNC2011,SasakiPRSB2013,WangNC2024}. A prominent research concern is the consideration of strategy-dependent feedback, which accounts for bidirectional influences between individual strategies and some factors such as environmental resources~\cite{PercBiosystems2009,ShaoEPL2019,wangqcsf25}, group size~\cite{LeeChaosSF2023}, or population state~\cite{ChenPloSone2012,HauertJRSI2024,HauertJTB2006,ShiPhysicaA2012,WangChaos2025}. Indeed, such feedback is also observed between the multiplication factor and the strategy pattern of the group~\cite{BejanJAP2017,StrausJAE2022}. For example, cell growth is regulated by growth factors, where the regulatory effect exhibits a nonlinear functional relationship with the concentration of growth factors~\cite{FrankBD2013,ArchettiJTB2016}.
Production cost is subject to economies of scale, where the marginal cost decreases with output. Thus, the total cost exhibits a nonlinear function of the quantity produced~\cite{BejanJAP2017}. Recently, Hauert and McAvoy~\cite{HauertJRSI2024}\break considered a feedback between the multiplication factor and group strategy choices by proposing that the multiplication factor depends linearly or nonlinearly on the number of contributors in the group. They found that a linear feedback leads to richer evolutionary dynamics and breaks the dominance of defection observed in the traditional public goods game.

An alternative way to foster cooperation is the usage of incentives in the traditional public goods game, including punishment~\cite{SongJRSI2026,HanAB2016,HauertScience2007,SantosNature2008,WangNC2024,ZhangChaos2025}, reward~\cite{HauertJTB2010,LuPRE2020,SigmundPNAS2001},  exclusion~\cite{LiuChaos2018,LiuJRSI2021,SasakiPRSB2013,QuanChaos2019}, and so on.  Notably, peer punishment has been proven to be an effective way of promoting cooperation in multi-player interactions~\cite{BoydPNAS2003,FehrNature2002,FowlerPNAS2005,GachterScience2008,RandNC2011,SigmundPNAS2001}. Specifically, in public goods games with peer punishment, punishers help sustain cooperation by imposing fines on defectors, thereby reducing the payoff advantage of defectors. However, due to the extra cost of punishment, pure cooperators gain an evolutionary advantage over punishers who earn less. This transforms the dilemma to another level, called the second-order free-rider problem, wherein cooperators free-ride on the punishment efforts of others~\cite{HeckathornRS1989,HilbePNAS2014}. Thus, understanding how punishment emerges and evolves under such conditions has attracted significant scientific interest~\cite{HauertScience2007}. To solve this problem, Wang {\it {\rm et al.}}~\cite{WangChaos2025} considered that the multiplication factor depends linearly on both the strategic composition of the entire population and the game groups. They accordingly established two forms of state feedback mechanisms -- global and local -- in the public goods game with punishment. They found that these state feedback mechanisms, especially the local one, can significantly enhance the evolution of cooperation and alleviate the second-order free-rider problem.

However, existing studies mainly focused on the linear strategy frequency dependence of the multiplication factor and explored the conditions required to promote the evolution of cooperation and mitigate the second-order free-riding problem. Indeed, in real-world scenarios, the dependence of multiplication factors on strategy frequency may not be limited to linear forms. Instead, it could also take the quadratic, exponential, or other nonlinear forms~\cite{szolnokipre10,PachecoPloS2015,SantosPNAS2011}.  Therefore, an interesting yet unresolved question arises: if the multiplication factor follows a general form of a frequency-dependent function, what conditions ensure the promotion of cooperation and resolve the second-order free-riding problem?

This work addresses this question by introducing a general frequency-dependent multiplication factor into the public goods game within finite, well-mixed populations. We derive analytical conditions for the evolution of cooperation under weak selection. We find that the necessary condition depends solely on the investment return rate in the full cooperation state of the game group. Once the return value in this state exceeds a critical threshold, natural selection favors cooperation. We also introduce the general frequency-dependent framework to public goods games with peer punishment. Under weak selection, we find that the requested condition for stable cooperation and punishment
depends only on the return rates in the states of full cooperation and full punishment in the interaction group. These conditions are respectively identified. We show that a high return rate in the full cooperation state favors the evolution of cooperation, while a high return value in the full punishment state enhances the evolutionary advantage of punishment. To complete our work, we perform individual-based simulations to verify our theoretical findings.

\section{General frequency-dependent~returns in public goods games}\label{Xsec2-2}\label{Xsec2-2} \label{sec2}

\subsection{Model}\label{Xsec3-2.1}\label{Xsec3-2.1}\label{sec21}

We consider a finite well-mixed population with $N$ individuals. At each time step, $n$ individuals $(2\le n< N)$ are randomly selected from the population to participate in a one-shot public goods game. Each individual in the group can choose a strategy from the set $\mathcal{S}=\left\lbrace C,D \right\rbrace $, where $C$ denotes cooperation and $D$ denotes defection. Cooperators contribute $c$ to the public pool, whereas defectors do not. The total contribution is multiplied by a multiplication factor $r$ and then distributed equally among all $n$ participants. Choosing them randomly, we assume that there are $j_C$ cooperators and $j_D$ defectors among them, therefore $n=j_C+j_D$. Unlike the traditional models~\cite{LiuMMMAS2019,SantosNature2008,SemmannNature2003,SzolnokiJTB2013}, we introduce a general frequency-dependent multiplication factor depending on the strategy composition of the group. Specifically, we define $r$ as a general function dependent solely on the number of cooperators:
\begin{equation}\label{r1}
	r=r(j_{C}).
\end{equation}
Here, the multiplication factor $r$ can characterize the inverse of dilemma strength in the public goods game. 	A larger value of $r$ weakens the dilemma, making cooperation easier to evolve, whereas a smaller $r$ value strengthens the dilemma, favoring defection~\cite{Tanimoto2021}.

Accordingly, the payoffs for a cooperator and a defector within a game group can be written, respectively, as
\begin{equation}\label{pay}
	\Pi_C(j_C,j_D)=\frac{r(j_C)j_C}{n}c-c\quad {\rm and}\quad
	\Pi_D(j_C,j_D)=\frac{r(j_C)j_C}{n}c.
\end{equation}

We consider that the number of individuals adopting a given strategy will evolve in time according to a mutation-selection process combined with the pairwise comparison rule~\cite{SzaboAPS1998,VasconcelosNCC2013}. Specifically, a randomly selected individual $X$ in the population updates its strategy. With probability $\mu$, individual $X$ randomly switches to another strategy from the strategy set of $\mathcal{S}$. With probability $1-\mu,X$ imitates the strategy of a randomly selected individual $Y$ with probability
\[
\frac{1}{1+\exp(-\delta(P_{Y}-P_{X}))}.
\]
Here, $P_X$ and $P_Y$ are the payoff values of individual $X$ and $Y$, respectively. $\delta\ge0$ denotes the selection strength, measuring the role of payoff difference on
the strategy imitation process~\cite{SzaboAPS1998}. In the limit of strong selection $\delta\to\infty$, a player invariably adopts the strategy of the opponent with a higher payoff. Conversely, under the weak selection limit $\delta\to0$, strategy updating becomes essentially random, but strategies yielding higher payoffs are more likely to be imitated.

In the following section, we focus on calculating the average fraction of cooperators in the public goods game with the general frequency-dependent investment returns, and then analyze the conditions for the evolution of cooperation.

\subsection{Theoretical analysis}\label{Xsec4-2.2}\label{sec22}

In this work, we focus on the case of sufficiently small mutation, i.e., $\mu\to 0$. We emphasize that this assumption is mathematically tractable and biologically plausible, as it corresponds to the realistic scenario in which genetic mutations occur relatively rarely compared to selection and drift~\cite{HauertScience2007,SigmundNature2010}. Under this limit, the population spends almost all of its time in one of two homogeneous states in which
all individuals adopt the cooperation or defection strategy~\cite{FudenbergJTB2004}. Thus, the long-term evolutionary dynamics of the population can be described by an embedded Markov chain that captures transitions between these two homogeneous states. Accordingly, the transition probabilities between the two homogeneous states are characterized by the following transition matrix
\begin{equation}\label{mat1}
	\begin{matrix}
		& ~~~ALL C & ALL D  \\
		\begin{matrix}
			ALL C \\
			ALL D \\
		\end{matrix}
		&\left(\  \begin{matrix}
			1-\mu\rho_{DC} \\
			\mu\rho_{CD}\\
		\end{matrix}
		\right. & \left.
		\begin{matrix}
			\mu\rho_{DC} \\
			1-\mu\rho_{CD}\\
		\end{matrix}\ \right),
	\end{matrix}
\end{equation}
where $ALLC$ ($ALLD$) denotes the homogeneous state in which each individual in the population adopts cooperation (defection). $\rho_{uv}$  ($u\ne v$ and $u,v\in\mathcal{S}$) is the fixation probability that the sole mutant who adopts strategy $u$ takes over the entire population consisting of $v$-individuals~\cite{SigmundNature2010}, given by
\begin{equation}\label{fixed}
	\rho_{uv}=\frac{1}{1+\sum_{j=1}^{N-1}\exp(\delta\sum_{N_u=1}^{j}(P_{vu}(N_u)-P_{uv}(N_u)))}.
\end{equation}
Here, $P_{vu}(N_u)$ and $P_{uv}(N_u)$ are the average payoff values of individuals with strategy $v$ and with strategy $u$, respectively, within a population composed of $N_u$ $u$-strategists and $N_v=N-N_u$ $v$-strategists. They are given by
\begin{equation}\label{average1}
	P_{uv}(N_u)=\sum_{n_u=0}^{n-1}\frac{\binom{N_u-1}{n_u}\binom{N-N_u}{n-n_u-1}}{\binom{N-1}{n-1}} \overline{\Pi}_u(n_u+1,n-n_u-1),
\end{equation}
and
\begin{equation}\label{average2}
	P_{vu}(N_u)=\sum_{n_u=0}^{n-1}\frac{\binom{N_u}{n_u}\binom{N-N_u-1}{n-n_u-1}}{\binom{N-1}{n-1}} \overline{\Pi}_v(n_u,n-n_u),
\end{equation}
respectively, where $n_u$ represents the number of $u$-players in the group (excluding the focal individual).
The function $\overline\Pi_i(l,m)$ is the payoff of an $i$-strategist when the group contains $l$ $u$-strategists and $m$ $v$-strategists, given by
\[
\overline{\Pi}_i(l,m)=
\begin{cases}
	\Pi_i(l,m),& \text{ if}\ (u,v)=(C,D), \\
	\Pi_i(m,l),& \text{ if}\ (u,v)=(D,C).
\end{cases}
\]

We note that the stationary distribution of the Markov chain is given by the normalized left eigenvector of the transition matrix~\eqref{mat1} associated with the eigenvalue of 1~\cite{FudenbergJTB2004}, given as
\begin{equation}\label{staionary1}
	\left(f_C,f_D  \right) = \left( \frac{\rho_{CD}}{\rho_{CD}+\rho_{DC}},\frac{\rho_{DC}}{\rho_{CD}+\rho_{DC}}\right).
\end{equation}
Here, $f_i$ ($i\in\mathcal{S}$) measures the average abundance of strategy $i$ during the long evolutionary process~\cite{TraulsrnPRE2006}.

In the following, we derive $f_i$ under the weak selection limit (i.e., $\delta\to0$). This assumption is well-justified, as the payoff from the game typically provides only a minor contribution to strategy updating~\cite{TarnitaPNAS2011}. Since $\sum_{N_u=1}^{N-1}P_{uv}(N_u)$$=\sum_{N_v=1}^{N-1}P_{uv}(N_v)$, the fixation probability can be approximated using the Taylor expansion around $\delta=0$:
\begin{equation}\label{fix}
	\rho_{uv}=
	\frac{1}{N}-\frac{1}{N^2}\left(\sum_{N_v=1}^{N-1}N_v(P_{vu}(N_v)-P_{uv}(N_v)) \right) \delta.
\end{equation}
Therefore, substituting Eq.~{\eqref{fix}} into~\eqref{staionary1}, we get
\begin{equation*}
	\begin{aligned}
		f_C|_{\delta=0}=&f_D|_{\delta=0}=\frac{1}{2},\\
		\frac{df_C}{d\delta}|_{\delta=0}
		=&
		\frac{1}{4}\sum_{N_C=1}^{N-1}(P_{CD}(N_C)-P_{DC}(N_C)),
	\end{aligned}
\end{equation*}
and
\begin{equation*}
	\frac{df_D}{d\delta}|_{\delta=0}=
	-\frac{1}{4}\sum_{N_C=1}^{N-1}(P_{CD}(N_C)-P_{DC}(N_C)),
\end{equation*}
respectively. Consequently, under weak selection, we have
\begin{equation}\label{stationarydistribution1}
	\begin{aligned}
		f_C= \frac{1}{2}+\frac{1}{4}\sum_{N_C=1}^{N-1}(P_{CD}(N_C)-P_{DC}(N_C))\delta,\\
		f_D= \frac{1}{2}-\frac{1}{4}\sum_{N_C=1}^{N-1}(P_{CD}(N_C)-P_{DC}(N_C))\delta.
	\end{aligned}
\end{equation}
Eq.~{\eqref{stationarydistribution1}} shows that the average abundances of strategies are determined by the cumulative payoff difference $\sum_{N_C=1}^{N-1} (P_{CD}(N_C)- P_{DC}(N_C))$. Furthermore, the following proposition provides an explicit form for this key quantity.

\begin{proposition}\label{pro1}
	For the  public goods game, we assume that the multiplication factor is a frequency-dependent function, as defined in Eq.~{\eqref{r1}}. Then
	\begin{equation}
		\sum_{N_C=1}^{N-1}(P_{CD}(N_C)-P_{DC}(N_C))
		=\left( r(n)\frac{N-n}{n}+1-N\right)c,
		\label{Xeqn10}
	\end{equation}
	where $r(n)$ denotes the multiplication factor value in the full-cooperation state of the interaction group.
\end{proposition}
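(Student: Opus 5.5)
The plan is to rewrite both averaged payoffs as expectations of a single function under hypergeometric weights, and then see that the sum over $N_C$ telescopes. Put $g(k)=r(k)\,k\,c/n$ for $0\le k\le n$, so $g(0)=0$ and $g(n)=r(n)c$. Let $H(M,k)=\binom{M}{k}\binom{N-1-M}{n-1-k}/\binom{N-1}{n-1}$. This is the probability that $k$ of the $n-1$ co-players are cooperators when $M$ of the other $N-1$ individuals cooperate.

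With $(u,v)=(C,D)$, Eqs.~\eqref{average1}--\eqref{average2} and \eqref{pay} give
\begin{equation*}
P_{CD}(N_C)=\sum_{k=0}^{n-1}H(N_C-1,k)\,g(k+1)-c,\qquad
P_{DC}(N_C)=\sum_{k=0}^{n-1}H(N_C,k)\,g(k).
\end{equation*}
The constant $-c$ contributes $-(N-1)c$ after summation over $N_C=1,\dots,N-1$, which is the $1-N$ term in \eqref{Xeqn10}. It remains to show that the $g$-parts sum to $r(n)c(N-n)/n$.

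Write $m=N_C-1$ in the first sum, so $m$ runs over $0,\dots,N-2$; in the second sum take $m=N_C$ running over $1,\dots,N-1$. The key tool is the column-sum identity
\begin{equation*}
\sum_{M=0}^{N-1}H(M,k)=\frac{N}{n}\qquad(0\le k\le n-1).
\end{equation*}
It follows from the Vandermonde-type convolution $\sum_{M}\binom{M}{k}\binom{N-1-M}{n-1-k}=\binom{N}{n}$ together with $\binom{N}{n}/\binom{N-1}{n-1}=N/n$. Each of our two sums misses exactly one endpoint of this full range.

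At the endpoints the weights are degenerate. $H(N-1,k)$ equals $1$ if $k=n-1$ and $0$ otherwise, since all co-players cooperate. $H(0,k)$ equals $1$ if $k=0$ and $0$ otherwise.

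For the first sum this gives $\frac{N}{n}\sum_{k=0}^{n-1}g(k+1)-g(n)$. For the second it gives $\frac{N}{n}\sum_{k=0}^{n-1}g(k)-g(0)$. Their difference telescopes to
\begin{equation*}
\frac{N}{n}\bigl(g(n)-g(0)\bigr)-g(n)+g(0)=g(n)\frac{N-n}{n}=r(n)\frac{N-n}{n}\,c,
\end{equation*}
using $g(0)=0$. Adding $-(N-1)c$ yields \eqref{Xeqn10}.

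The main obstacle is the bookkeeping rather than any single estimate. One has to interchange the sums over $N_C$ and $k$ correctly, keep track of the one-step shift in $N_C$ between $P_{CD}$ and $P_{DC}$, and check that the boundary terms are exactly the degenerate weights at $M=0$ and $M=N-1$. The column-sum identity is what makes every intermediate value $r(1),\dots,r(n-1)$ cancel; I would verify it first by the hockey-stick/Vandermonde argument, with $n<N$ ensuring all binomials are well defined.
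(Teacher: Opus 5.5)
Your proof is correct, but it takes a different route from the paper's proof in Appendix B. You interchange the sums immediately and apply the full-range column sum $\sum_{M=0}^{N-1}H(M,k)=N/n$ to both averaged payoffs. Each sum over $N_C$ then collapses to $\frac{N}{n}$ times a plain sum of $g$-values minus one degenerate endpoint weight, and the difference telescopes to $\bigl(\frac{N}{n}-1\bigr)\bigl(g(n)-g(0)\bigr)$.

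The paper instead splits off the $n_C=n-1$ term of $P_{CD}$ and rewrites the remaining $P_{CD}$ weights as $P_{DC}$ weights multiplied by the ratio $\frac{k(N-N_C)}{N_C(n-k)}$. It then shows that the coefficient of each interior $r(k)$, $1\le k\le n-1$, vanishes, using the harmonic-type identity $\sum_{s}\binom{s}{k}\binom{N-s-1}{n-k-1}\big/\bigl(s\binom{N-1}{n-1}\bigr)=1/k$ for $k\ge 1$ together with the endpoint-excluded column sums. The cost term is evaluated with a further identity, item (v) of Appendix A, rather than by simply noting that the hypergeometric weights sum to one.

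Your version has several advantages. It needs nothing beyond the Vandermonde/hockey-stick convolution and handles the $-c$ term trivially. It makes transparent why only the boundary value $r(n)$ survives. It also applies verbatim to any benefit function $g(j_C)$ received equally by both types. It is in fact the same telescoping mechanism the paper itself uses for the $C$-versus-$P$ identity in Appendix C, so adopting it would unify the two appendices.

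The paper's route does display the cancellation of each interior $r(k)$ through an explicit relation between the two hypergeometric kernels. The price is two extra identities and noticeably heavier bookkeeping.
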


The proof details of this proposition are given in Appendix~{\ref{suppB}}. Intuitively, one would guess that this cumulative payoff difference should depend on the functional expression of the multiplication factor $r(j_C)$. Surprisingly, Proposition~{\ref{pro1}} highlights that the payoff difference depends solely on the boundary investment return $r(n)$ in the full cooperation state, irrelevant to the return values in other strategy configuration states. Furthermore, we find that the average fractions of cooperators and defectors can be determined by this boundary value $r(n)$, given as
\begin{equation}\label{fC}
	\begin{aligned}
		f_C&= \frac{1}{2}+\frac{c}{4}\left( r(n)\frac{N-n}{n}+1-N\right)\delta,\\
		f_D&= \frac{1}{2}-\frac{c}{4}\left( r(n)\frac{N-n}{n}+1-N\right)\delta.
	\end{aligned}
\end{equation}

In the following, according to the calculation of the average frequency of cooperators~\eqref{fC}, we aim to derive the mathematical condition for the evolution of cooperation. To do that, we first present the formal criteria used throughout this paper to determine whether a strategy is evolutionarily successful.
\begin{definition}~\cite{HauertJRSI2024,TarnitaPNAS2011}\label{def1}
	Let $\mathcal{S}$ be the set of strategies,  with $|\mathcal{S}|$ being the number of strategies in this set. Then
	\begin{enumerate}
		\item[(i)] A strategy $u\in\mathcal{S}$ is favored by natural selection if it is more abundant than that in the neutral drift scenario, that is,
		\begin{equation*}
			f_u>\frac{1}{|\mathcal{S}|}.
		\end{equation*}
		\item[(ii)] A strategy $u\in\mathcal{S}$ has an evolutionary advantage over strategy  $v\in\mathcal{S}$ if $u$ is more abundant than $v$ in the long run, that is,
		\begin{equation*}
			f_u>f_v.
		\end{equation*}
	\end{enumerate}
	\label{Xenun9}
\end{definition}

In the following, we present the first main conclusion in this paper, according to Eq.~{\eqref{fC}} and Definition~{\ref{def1}}.

\begin{theorem}\label{them1}
	For the public goods game, we assume that the multiplication factor is frequency-dependent, as defined in Eq.~{\eqref{r1}}.  Then, cooperation is favored by natural selection if
	$r(n)>(N-1)n/(N-n)$.
\end{theorem}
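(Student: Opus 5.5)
The argument is a direct consequence of Eq.~\eqref{fC} combined with Definition~\ref{def1}(i), with Proposition~\ref{pro1} supplying the key closed form. Since $|\mathcal{S}|=2$, cooperation is favored by natural selection exactly when $f_C>1/2$. I will take the weak-selection expansion of $f_C$ in Eq.~\eqref{fC} as given. That expansion was obtained by inserting the cumulative payoff difference from Proposition~\ref{pro1} into Eq.~\eqref{stationarydistribution1}. It gives
\begin{equation*}
f_C-\frac{1}{2}=\frac{c\,\delta}{4}\left( r(n)\frac{N-n}{n}+1-N\right)
\end{equation*}
to first order in $\delta$.

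The steps, in order, are as follows.

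First, I use $c>0$ and $\delta>0$ small, with weak selection understood as in the derivation of Eq.~\eqref{fix}. Then the sign of $f_C-1/2$ equals the sign of $r(n)(N-n)/n+1-N$, because the first-order term dominates the $O(\delta^2)$ remainder.

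Second, I rearrange the inequality $r(n)(N-n)/n>N-1$. Since $2\le n<N$, the factor $(N-n)/n$ is strictly positive. Multiplying through by $n/(N-n)$ preserves the direction of the inequality and yields $r(n)>(N-1)n/(N-n)$.

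Third, I conclude that under this condition $f_C>1/2=1/|\mathcal{S}|$, which is the claim. As a side remark, $f_C+f_D=1$ shows the same condition also gives $f_C>f_D$, which is Definition~\ref{def1}(ii).

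The only subtle point is the role of the weak-selection approximation. The statement should be read as holding for sufficiently small $\delta>0$, where the linear term determines the sign. I would note this explicitly rather than claim the result for all $\delta$.

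All the genuine work lies in Proposition~\ref{pro1}. It shows that the sum collapses to an expression involving only $r(n)$, and it is assumed here. So I expect no real obstacle beyond checking positivity of $N-n$ so that the inequality direction is preserved.
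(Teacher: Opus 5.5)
Your proposal is correct and matches the paper's argument: the paper also reads the condition directly off Eq.~\eqref{fC}, which comes from Proposition~\ref{pro1}, and applies Definition~\ref{def1}(i). Your remarks on the positivity of $(N-n)/n$ and $c$ and on restricting to sufficiently small $\delta>0$ only make explicit what the paper leaves implicit.
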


\begin{proof}
	The proof directly follows from Eq.~{\eqref{fC}}.
\end{proof}

\begin{remark}
	This theorem demonstrates that cooperation is favored by selection whenever the investment return rate in the full-contribution state within the game group is sufficiently large. Besides, according to Eq.~{\eqref{fC}}, we find that the abundance of cooperation increases monotonically with this boundary return value.
	
	In particular, when the multiplication factor is constant, i.e., $r(j_C) \equiv r$, the above derived critical condition simplifies to  $r > n(N-1)/(N-n)$~\cite{HauertJRSI2024}. In the limit of an infinite population ($N \to\infty$), this condition further reduces to the well-known classical rule $r >n$~\cite{HauertPRB2006}.
	\label{Xenun3}
\end{remark}

\subsection{Numerical simulations}\label{Xsec5}

We now provide numerical examples to verify the above theoretical results.
\begin{figure*}
	\centerline{\includegraphics[width=0.7\textwidth]{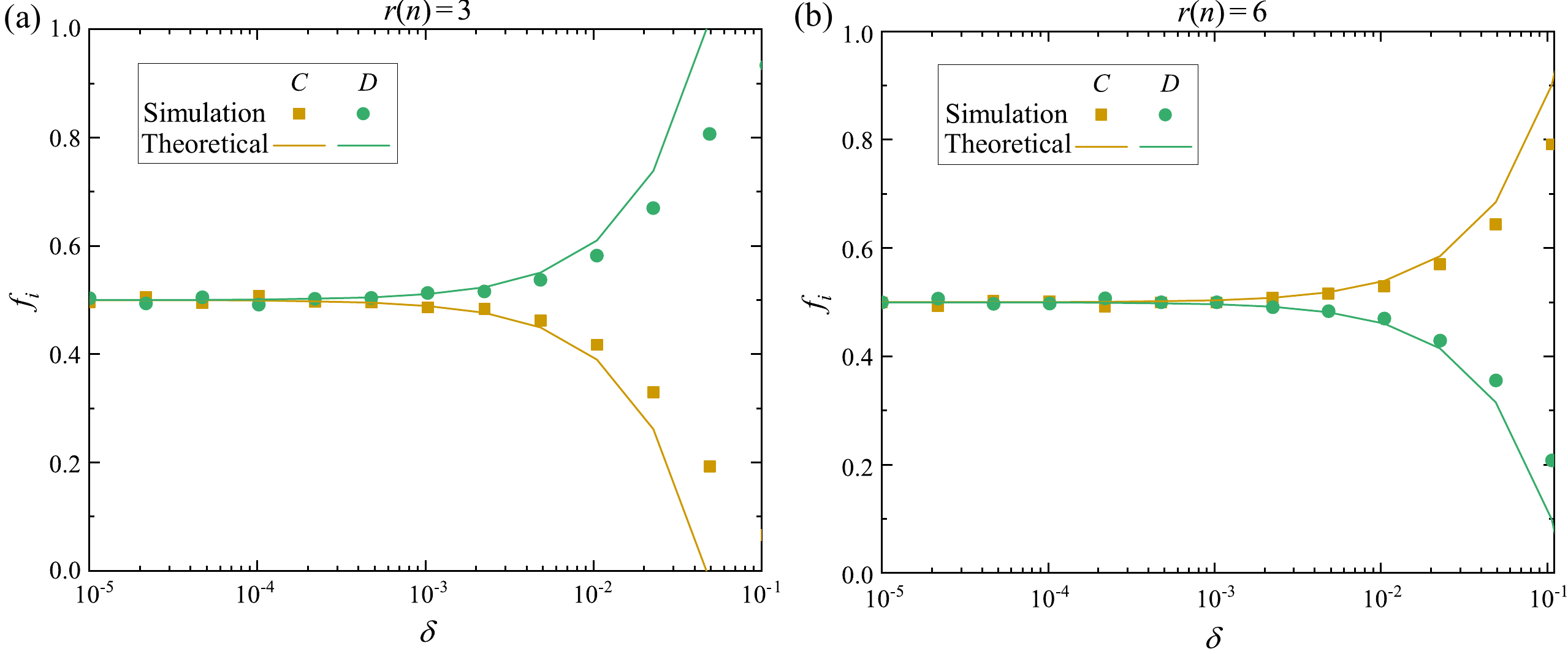}}
	\caption{Average abundance of strategies as a function of the intensity of selection in public goods games with the frequency-dependent return described by Eq.~{\eqref{r_1}}.
		Lines represent theoretical predictions from Eq.~{\eqref{fC}}. Symbols denote the results of individual-based simulations, where each data point is achieved by averaging over $15$ independent realizations and each player takes $10^8$ updates on average in each realization. Parameters: $N=100,n=5,\mu=0.001$, and $r'=3$.}
	\label{fig1}
\end{figure*}
\begin{figure*}
	\centerline{\includegraphics[width=0.7\textwidth]{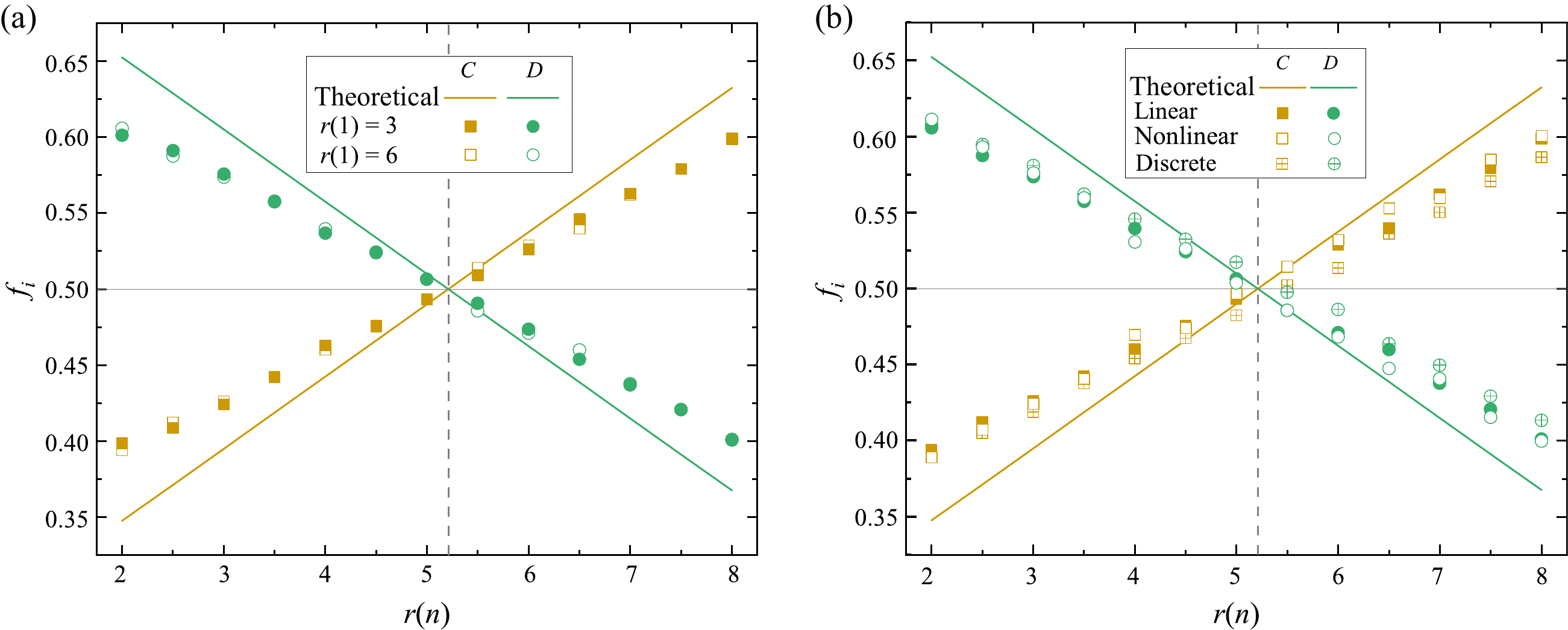}}
	\caption{Average abundance of strategies as a function of the return value $r(n)$ in  the local full-cooperation state. Panel~(a) shows the average abundance of strategies for two different values of $r'$. Panel~(b) presents the average abundance of strategies for different frequency-dependent functions: the linear function~\eqref{r_1}, the nonlinear function~\eqref{r22}, and the piecewise function~\eqref{r0} with $r_1=r_0+k$. Colored solid lines show the corresponding theoretical predictions. Black dashed lines indicate the theoretical threshold value at which cooperation and defection exhibit equal long-run abundance, derived in  Theorem~{\ref{them1}}. Symbols denote the results of individual-based simulations, where each data point is the average over $15$ independent realizations of initial conditions and each player takes $10^8$ updates on average in each realization.
		Parameters: $N=100,n=5,m=5,c=1,\delta=0.01,\mu=0.001, r_0=2$, and $r_2=1.1$.}
	\label{fig2}
\end{figure*}
We first consider the public goods game with a linear frequency-dependent multiplication factor, given by
\begin{equation}\label{r_1}
	r(j_C) = r' + \frac{j_C-1}{n-1}(r''-r'),
\end{equation}
where $r'$ ($r''$) is the investment return when a single (all) individual contributes to the common pool.
If $r'<r''$, this represents economies of scale, whereas $r'>r''$ indicates diminishing returns~\cite{HauertJRSI2024}.
We then have $r(n)=r''$. Fig.~{\ref{fig1}} depicts the average abundance of strategies as a function of the selection intensity in the public goods game with a linear frequency-dependent return. Under a small $r(n)$  (shown in Fig.~{\ref{fig1}}(a)), defectors become more advantageous, while the frequency of cooperators declines with the selection intensity. In contrast, under a large $r(n)$, the frequency of cooperators increases with the selection intensity and cooperators can prevail over defectors.
Importantly, Fig.~{\ref{fig2}}(a) provides a more intuitive verification that higher values of $r(n)$ promote cooperation, and further demonstrates that cooperation is favored when $r(n)>(N-1)n/(N-n)$~\cite{HauertJRSI2024}.

To verify the decisive role of the investment return under the full-cooperation state of the game group, $r(n)$, in the evolution of cooperation, we further examine two other forms of the function for the frequency-dependent multiplication factor: a nonlinear function and a piecewise function. Specifically, the nonlinear frequency-dependent multiplication factor is given by
\begin{equation}\label{r22}
	r(j_C) = r_{0} + \frac{k}{\exp\left( \frac{j_{C}}{n}-1\right) }.
\end{equation}
The piecewise multiplication factor is defined as
\begin{equation}\label{r0}
	r(j_C) =
	\begin{cases}
		r_1,  & {\rm if}\ {j_C\ge m}, \\
		r_2, &  {\rm otherwise},
	\end{cases}
\end{equation}
where $r_1,r_2>1,m\le n$ and $m\in\mathbb{Z}$.
When the conditions $1<r_2<(N-1)n/(N-n)<r_1$ hold and $m$ is sufficiently large, this function represents an extreme case in which a substantial return is achieved only when a majority of individuals in the group contribute, reflecting an incentive for perfect collective action. Fig.~{\ref{fig2}}(b) compares the evolutionary outcomes for these two function forms together with the previously introduced linear form~\eqref{r_1}. The results show that the abundance of cooperation increases with $r(n)$, and this trend holds regardless of the specific functional form of the multiplication factor. Once $r(n)$ exceeds a critical threshold (indicated by the black dashed line), cooperation is favored and has an evolutionary advantage. These findings provide robust numerical support for the theoretical predictions.

\section{General frequency-dependent returns in public goods games with punishment}\label{Xsec6}\label{Xsec4}\label{sec3}
In this section, we introduce the general frequency-dependent multiplication factor into the public goods game with peer punishment, and then analyze the conditions under which cooperation can evolve and the second-order free-rider problem can be mitigated.

\subsection{ Model extension with punishment}\label{Xsec7}\label{Xsec5}\label{sec31}

In the finite population playing the public goods games, we now introduce a third strategy: a punisher ($P$), who contributes to the pool and punishes all defectors in the group. To be specific, each punisher not only contributes $c$ to the public pool, but also imposes a fine of $\beta$ (punishment fine) on each defector on a personal cost $\gamma$ (punishment cost) with $\beta>\gamma>0$.

The strategy set is now enlarged to $\mathcal{S}=\left\lbrace C,D,P\right\rbrace $. Accordingly, we construct the general frequency-dependent multiplication factor in public goods games with peer punishment, which is then described by a bivariate  function
\begin{equation}\label{mu1}
	r=r(j_{C},j_{P}).
\end{equation}
Here, $j_P$ is the number of punishers in the interaction group.
Therefore, the payoffs of cooperators, defectors, and punishers within a game group are written as
\begin{equation}\label{piC}
	\Pi_C(j_C,j_D,j_P)=\frac{r(j_C,j_P)}{n}(j_C+j_P)c-c,
\end{equation}
\begin{equation}\label{piD}
	\Pi_D(j_C,j_D,j_P)=\frac{r(j_C,j_P)}{n}(j_C+j_P)c-\beta j_P,
\end{equation}
and
\begin{equation}\label{piP}
	\Pi_P(j_C,j_D,j_P)=\frac{r(j_C,j_P)}{n}(j_C+j_P)c-c-\gamma j_D,
\end{equation}
respectively. The process for strategy updating remains identical to that described in Section~{\ref{sec21}}. It is a mutation-selection process, governed by the pairwise comparison rule with a rare mutation rate $\mu$.

In the following, we analyze the  conditions under which the introduced frequency-dependent multiplication factor~\eqref{mu1} promotes cooperation and mitigates the second-order free-rider problem.

\subsection{Theoretical analysis}\label{sec32}

First, we calculate the stationary distribution of strategies for the public goods game with punishment, incorporating a general frequency-dependent investment return.

For a rare mutation, the population will never contain more than two different strategies simultaneously, and the population can evolve into a homogeneous state where all individuals adopt the same strategy because the time between two mutation events is long enough~\cite{HauertScience2007}. This allows us to approximate the evolutionary dynamics by means of an embedded Markov chain whose states correspond to the different homogeneous states of the population $\{ ALLC,ALLD,ALLP \} $, characterized by the following transition matrix:

\begin{equation}\label{matrix}	\small
	\setlength{\arraycolsep}{3pt}
	\begin{matrix}
		& ~~~ALL C & ALL D& ALL P  \\
		\begin{matrix}
			ALL C\\
			ALL D\\
			ALL P\\
		\end{matrix}&\left(\begin{matrix}
			1-\frac{\mu\rho_{DC}}{2}-\frac{\mu\rho_{PC}}{2}\\
			\frac{\mu\rho_{CD}}{2}\\
			\frac{\mu\rho_{CP}}{2}
		\end{matrix}
		\right. & \left.
		\begin{matrix}
			\frac{\mu\rho_{DC}}{2}\\
			1-\frac{\mu\rho_{CD}}{2}-\frac{\mu\rho_{PD}}{2}\\
			\frac{\mu\rho_{DP}}{2}\\
		\end{matrix}
		\right. & \left.
		\begin{matrix}
			\frac{\mu\rho_{PC}}{2}\\
			\frac{\mu\rho_{PD}}{2}\\
			1-\frac{\mu\rho_{CP}}{2}-\frac{\mu\rho_{DP}}{2}	 \\
		\end{matrix}\ \right), \\
	\end{matrix}
\end{equation}
where the fixation probability $\rho_{uv}$ ($u,v\in\mathcal{S}$) is given by  Eq.~{\eqref{fixed}}.
In a population of $N_u$ $u$-strategists and $N_v = N - N_u$ $v$-strategists, the average payoff values of a $u$-strategist and a $v$-strategist are

\begin{equation}\label{average3}
	P_{uv}(N_u)=\sum_{n_u=0}^{n-1}\frac{\binom{N_u-1}{n_u}\binom{N-N_u}{n-n_u-1}}{\binom{N-1}{n-1}} \overline{\Pi}_u(n_u+1,n-n_u-1,0),
\end{equation}
and
\begin{equation}\label{average4}
	P_{vu}(N_u)=\sum_{n_u=0}^{n-1}\frac{\binom{N_u}{n_u}\binom{N-N_u-1}{n-n_u-1}}{\binom{N-1}{n-1}} \overline{\Pi}_v(n_u,n-n_u,0),
\end{equation}
where $w$ denotes the strategy in the strategy set distinct from $u$ and $v$. Here, the function $\overline\Pi_i(l,m,k)$ gives the payoff of an $i$-strategist in a group containing $l$ $u$-strategists, $m$ $v$-strategists, and $k$ $w$-strategists. Specifically,
\[
\overline\Pi_i(l,m,k)=
\begin{cases}
	\Pi_i(l,m,k), & \text{if } (u,v,w)=(C,D,P),\\[2pt]
	\Pi_i(m,l,k), & \text{if } (u,v,w)=(D,C,P),\\[2pt]
	\Pi_i(l,k,m), & \text{if } (u,v,w)=(C,P,D),\\[2pt]
	\Pi_i(k,l,m), & \text{if } (u,v,w)=(D,P,C),\\[2pt]
	\Pi_i(m,k,l), & \text{if } (u,v,w)=(P,C,D),\\[2pt]
	\Pi_i(k,m,l), & \text{if } (u,v,w)=(P,D,C).
\end{cases}
\]
The stationary distribution of the matrix~\eqref{matrix} is given by
\begin{equation}\label{stationary2}
	\begin{pmatrix}
		f_{C}\\
		f_{D}\\
		f_{P}
	\end{pmatrix}=
	\frac{1}{M}\begin{pmatrix}
		\rho_{CD}\rho_{CP}+\rho_{CD}\rho_{DP}+\rho_{PD}\rho_{CP}\\
		\rho_{DC}\rho_{CP}+\rho_{DC}\rho_{DP}+\rho_{PC}\rho_{DP}\\
		\rho_{DC}\rho_{PD}+\rho_{PC}\rho_{CD}+\rho_{PC}\rho_{PD}
	\end{pmatrix},
\end{equation}
which captures the average fraction of each strategy over long-term evolution. Here, $M$ is a normalization constant ensuring $\sum_{u\in\mathcal{S}}f_u=1$, given by
\begin{equation*}
	\begin{aligned}
		M=&\,\rho_{CD}\rho_{CP}+\rho_{CD}\rho_{DP}+\rho_{PD}\rho_{CP}
		+\rho_{DC}\rho_{CP}+\rho_{DC}\rho_{DP}\\&+\rho_{PC}\rho_{DP}
		+\rho_{DC}\rho_{PD}+\rho_{PC}\rho_{CD}+\rho_{PC}\rho_{PD}.
	\end{aligned}
\end{equation*}

Correspondingly,
\begin{equation*}
	f_C|_{\delta=0}= f_D|_{\delta=0}= f_P|_{\delta=0}=\frac{1}{3},
\end{equation*}
\begin{equation*}
	\begin{split}
		\frac{df_C}{d\delta}|_{\delta=0}=&
		\frac{1}{9}
		\Big(\sum_{N_C=1}^{N-1}\big( P_{CD}(N_C)-P_{DC}(N_C)
		+ P_{CP}(N_C)\\
		&-P_{PC}(N_C)\big)\vphantom{\sum_{N_C=1}^{N-1}}\Big),\\
		\frac{df_D}{d\delta}|_{\delta=0}=&
		\frac{1}{9}
		\Big(\sum_{N_D=1}^{N-1}\big( P_{DC}(N_D)-P_{CD}(N_D)
		+ P_{DP}(N_D)\\
		&-P_{PD}(N_D)\big)\vphantom{\sum_{N_C=1}^{N-1}}\Big),
	\end{split}
\end{equation*}
and
\begin{equation*}
	\begin{aligned}
		\frac{df_P}{d\delta}|_{\delta=0}=&
		\frac{1}{9}
		\Big(\sum_{N_P=1}^{N-1}\big( P_{PC}(N_P)-P_{CP}(N_P)
		+ P_{PD}(N_P)\\&-P_{DP}(N_P)\big)\vphantom{\sum_{N_C=1}^{N-1}}\Big).
	\end{aligned}
\end{equation*}
Therefore, under the assumption of $\delta\to 0$, the average fractions of cooperators, defectors, and punishers in the population are given as
\begin{equation}\label{stationary}
	\begin{aligned}
		f_C=&\frac{1}{3}+\frac{1}{9}\left( \sum_{N_C=1}^{N-1}\left( P_{CD}(N_C)-P_{DC}(N_C)\right) \right. \\&\left.
		+\sum_{N_C=1}^{N-1}\left( P_{CP}(N_C)-P_{PC}(N_C)\right)\right)\delta, \\
		f_D=&\frac{1}{3}-\frac{1}{9}\left(	\sum_{N_C=1}^{N-1}\left( P_{CD}(N_C)-P_{DC}(N_C)\right)  \right. \\&\left.
		+\sum_{N_P=1}^{N-1}\left( P_{PD}(N_P)-P_{DP}(N_P)\right)\right)\delta,\\
		f_P=&\frac{1}{3}+\frac{1}{9}\left(\sum_{N_P=1}^{N-1}\left( P_{PD}(N_P)-P_{DP}(N_P)\right)\right. \\&\left.
		-\sum_{N_C=1}^{N-1}\left( P_{CP}(N_C)-P_{PC}(N_C)\right)\right)\delta,
	\end{aligned}
\end{equation}
respectively. Obviously, Eq.~{\eqref{stationary}} are determined by three payoff difference sums which are: $\sum_{N_C=1}^{N-1}(P_{CD}(N_C)-P_{DC}(N_C))$, $\sum_{N_C=1}^{N-1}\left( P_{CP}(N_C)-P_{PC}(N_C)\right)$, and $\sum_{N_P=1}^{N-1} ( P_{PD}(N_P)-P_{DP}(N_P))$~\cite{SuiPLA2017}{.} We calculate these three terms as shown in the following proposition.

\begin{proposition}\label{pro2}
	For the public goods game with punishment, we assume that the multiplication factor is a frequency-dependent function, as defined in Eq.~{\eqref{mu1}}. Then
	\begin{equation}\label{PCD1}
		\begin{aligned}
			\sum_{N_C=1}^{N-1}(P_{CD}(N_C)-P_{DC}(N_C))
			=\left( r(n,0)\frac{N-n}{n}+1-N\right)c,
		\end{aligned}
	\end{equation}
	\begin{equation}\label{PCD2}
		\begin{aligned}
			&\sum_{N_C=1}^{N-1}\left( P_{CP}(N_C)-P_{PC}(N_C)\right)=\frac{N-n}{n}c(r(n,0)-r(0,n)),
		\end{aligned}
	\end{equation}
	and
	\begin{equation}\label{PCD3}
		\begin{aligned}
			&\sum_{N_P=1}^{N-1}\left( P_{PD}(N_P)-P_{DP}(N_P)\right)\\
			&\quad=c\left( r(0,n)\frac{N-n}{n}+1-N\right)+(\beta-\gamma)\frac{N(n-1)}{2},
		\end{aligned}
	\end{equation}
	where $r(n,0)$ and $r(0,n)$ are the multiplication factor values in the full-cooperation and full-punishment states, respectively.
\end{proposition}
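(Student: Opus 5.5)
The plan is to reduce all three sums to one hypergeometric identity, namely the one behind Proposition~\ref{pro1}. Fix a pair $(u,v)$ and write $j$ for the number of $u$-players among the $n-1$ co-players of a focal individual. The key step is to swap the order of summation: sum over $N_u$ first, then over $j$. For this I use the identity
\[
\sum_{N_u=1}^{N-1}\frac{\binom{N_u-1}{j}\binom{N-N_u}{n-1-j}}{\binom{N-1}{n-1}}=\frac{N}{n}
\qquad (0\le j\le n-1),
\]
which follows from the Vandermonde-type identity $\sum_{m}\binom{m}{a}\binom{K-m}{b}=\binom{K+1}{a+b+1}$. With $K=N-1$ it gives $\binom{N}{n}/\binom{N-1}{n-1}=N/n$. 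The analogous weight for $P_{vu}$ is $\sum_{N_u}\binom{N_u}{j}\binom{N-N_u-1}{n-1-j}/\binom{N-1}{n-1}$. This equals $N/n$ minus the boundary term dropped at $N_u=0$ or $N_u=N$. Concretely, the weight is $N/n$ for $1\le j\le n-2$, and $N/n-1$ for $j=0$ (the $N_u=0$ term is missing) and for $j=n-1$ (the $N_u=N$ term is missing).

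With these weights, each sum has the form
\[
\sum_{N_u}(P_{uv}-P_{vu})=\frac{N}{n}\sum_{j=0}^{n-1}\big(\overline\Pi_u(j+1,n-1-j,0)-\overline\Pi_v(j,n-j,0)\big)+\overline\Pi_v(0,n,0)+\overline\Pi_v(n-1,1,0).
\]
\textbf{Sum \eqref{PCD1}.} This is exactly Proposition~\ref{pro1} with $r(j_C)$ replaced by $r(j_C,0)$, so I would cite it. To see why only $r(n,0)$ survives, write $R(j)=r(j,0)\,jc/n$. The main sum telescopes: $\sum_j(R(j+1)-c-R(j))=R(n)-R(0)-nc=R(n)-nc$. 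The boundary terms are $\Pi_D(0,n,0)=R(0)=0$ and $\Pi_D(n-1,1,0)=R(n-1)$. This seems to leave $R(n-1)$, which contradicts the claim. So I must be careful about which boundary weight lands on which $\Pi$. Redoing it, $P_{vu}$'s group payoff $\overline\Pi_v(j,n-j)$ with $j=n-1$ corresponds to $N_u=N$, i.e.\ no $v$-player exists. That term must therefore be excluded, and the two corrections combine with the $P_{uv}$ side so that everything reduces to $R(n)$.

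\textbf{Main obstacle.} This boundary bookkeeping is where I expect the difficulty. I would verify it carefully, checking against the known constant-$r$ result $r(N-n)/n+1-N$. That check forces the total to be $\frac{N}{n}(R(n)-nc)-R(n)+c=\ldots$. The point is that the symmetric substitution $N_u\leftrightarrow N_v$ already used in \eqref{fix} handles the sum more cleanly. So the safest route is to follow the appendix proof of Proposition~\ref{pro1} verbatim with $r(\cdot,0)$.

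\textbf{Sum \eqref{PCD3}.} Treat $P$ versus $D$ the same way. The public-good part is identical in form, with $r(0,j)$, so it contributes $c\,(r(0,n)\frac{N-n}{n}+1-N)$. The punishment terms are linear in the group composition. For a punisher with $j$ punishing co-players, the extra payoff is $-\gamma(n-1-j)$. For a defector facing $j$ punishers, it is $-\beta j$. Summing these against the hypergeometric weights uses $\sum_{N_P}\mathbb{E}[\#\text{co-players of a type}]$, which is elementary: expected counts are $(n-1)N_P/(N-1)$-type expressions. This gives $(\beta-\gamma)\sum_{N_P}(n-1)\frac{\ldots}{N-1}=(\beta-\gamma)\frac{N(n-1)}{2}$ via $\sum_{k=1}^{N-1}k=N(N-1)/2$.

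\textbf{Sum \eqref{PCD2}.} Between $C$ and $P$ there are no defectors, so no one punishes or pays punishment costs. The payoffs differ only through $r$: in a group with $j$ cooperators and $n-j$ punishers, all members contribute and both types earn $r(j,n-j)c-c$. The within-group difference therefore vanishes, and the difference of averages comes purely from the differing group compositions seen by the two types. Applying the same reindexing, the interior terms cancel, and only the extreme compositions $r(n,0)$ and $r(0,n)$ survive. Matching the computation in \eqref{PCD1}, their coefficient is $\frac{N-n}{n}c$.
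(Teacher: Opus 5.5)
Citing Proposition~\ref{pro1} for \eqref{PCD1}, as the paper does, is fine. Your overall plan for the other sums (sum over $N_u$ first, use Vandermonde-type weights, telescope in $j$) is also the right one; it is exactly how the paper proves \eqref{PCD2}. The gap is the boundary bookkeeping you flag as the main obstacle: it is wrong as written, it is never repaired, and your argument for \eqref{PCD2} rests on it.

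Your first weight identity fails at $j=n-1$. With $m=N_u-1$ the sum stops at $m=N-2$, and the missing Vandermonde term $m=N-1$ (i.e.\ $N_u=N$) is nonzero exactly when $j=n-1$. So that weight is $\binom{N-1}{n}/\binom{N-1}{n-1}=(N-n)/n$, not $N/n$. Conversely, the $v$-side weight $\sum_{N_u=1}^{N-1}\binom{N_u}{j}\binom{N-N_u-1}{n-1-j}/\binom{N-1}{n-1}$ equals $N/n$ for every $j\ge1$, including $j=n-1$: at $N_u=N-1$ the single $v$-player has $n-1$ $u$-co-players with weight $1$, so nothing is lost there. Only $j=0$ drops a term, the one at $N_u=0$. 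Your explanation that $\overline\Pi_v(n-1,1,0)$ ``corresponds to $N_u=N$'' is therefore mistaken; the lost term is the all-$u$ group on the $u$-side. The correct master formula is
\begin{align*}
&\sum_{N_u=1}^{N-1}\bigl(P_{uv}(N_u)-P_{vu}(N_u)\bigr)\\
&\quad=\frac{N}{n}\sum_{j=0}^{n-1}\bigl(\overline\Pi_u(j+1,n-1-j,0)-\overline\Pi_v(j,n-j,0)\bigr)\\
&\qquad-\overline\Pi_u(n,0,0)+\overline\Pi_v(0,n,0).
\end{align*}
This is what your constant-$r$ check was forcing, since $-R(n)+c=-\overline\Pi_C(n,0,0)$; you explained the discrepancy away instead of correcting the weight.

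With the weights as you stated them, \eqref{PCD2} would come out as $\frac{N}{n}\bigl(S(n)-S(0)\bigr)+S(0)+S(n-1)$, where $S(j)=r(j,n-j)c-c$. That depends on $r(n-1,1)$, so the coefficient $\frac{N-n}{n}c$ you get by ``matching'' \eqref{PCD1} is asserted, not derived. With the corrected formula everything closes:
\begin{itemize}
\item For \eqref{PCD1}, $R(j)=r(j,0)jc/n$ gives $\frac{N}{n}(R(n)-nc)-R(n)+c$, which is the claim. This telescoping is also shorter than the paper's Appendix~\ref{suppB} route through identities (iii) and (v).
\item For \eqref{PCD2}, $\frac{N}{n}(S(n)-S(0))-S(n)+S(0)=\frac{N-n}{n}c\,(r(n,0)-r(0,n))$. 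This is precisely the paper's computation via identity (ii).
\end{itemize}

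Your treatment of the punishment terms in \eqref{PCD3} is correct and differs from the paper. Instead of identity (iv) of Appendix~\ref{suppA}, you use the hypergeometric means $(n-1)(N-N_P)/(N-1)$ (defectors around a punisher) and $(n-1)N_P/(N-1)$ (punishers around a defector). Their sums over $N_P=1,\dots,N-1$ both equal $N(n-1)/2$, which gives $(\beta-\gamma)N(n-1)/2$ more elementarily than the paper's binomial manipulation.
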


The proof of Proposition~{\ref{pro2}} is provided in Appendix~{\ref{suppC}}. This proposition indicates that the three cumulative payoff differences within Eq.~{\eqref{stationary}} are determined solely by the investment return values in the group state of full-cooperation $r(n,0)$ and full-punishment $r(0,n)$. Accordingly,
\begin{equation}\label{sta2}
	\begin{split}
		f_C=&\,\frac{1}{3}+\frac{c}{9}\left((2r(n,0)-r(0,n))\frac{N-n}{n}+ 1-N\right)\delta,\\
		f_D=&\,\frac{1}{3}+\frac{1}{9}\Big(-\left( r(n,0)+r(0,n)\right) \frac{N-n}{n}c 
		-2\left(1-N\right)c\\
		&-(\beta-\gamma)\frac{N(n-1)}{2}\Big)\delta,\\
		f_P=&\,\frac{1}{3}+\frac{1}{9}\Big(\left(  \left( 2r(0,n)-r(n,0)\right)\frac{N-n}{n}+1-N\right) c 
		\\&+(\beta-\gamma)\frac{N(n-1)}{2}\Big) \delta.
	\end{split}
\end{equation}
Based on Eq.~{\eqref{sta2}}, we have the following conclusion.

\begin{theorem}\label{them2}
	For the public goods game with punishment, we assume the multiplication factor is frequency-dependent, as defined in Eq.~{\eqref{mu1}}. Then the following results hold:
	\begin{enumerate}
		\item[(i)]
		Cooperation is favored if $r(n,0)>\frac{r(0,n)}{2}+\frac{n(N-1)}{2(N-n)}$. It has an evolutionary advantage over defection when $r(n,0)>\frac{(N-1)n}{N-n} -(\beta-\gamma)\frac{Nn(n-1)}{6c(N-n)}$, and over punishment when
		$ r(n,0)>r(0,n) +(\beta-\gamma)\frac{Nn(n-1)}{6c(N-n)}$.
		\item[(ii)]
		Punishment is favored if $r(0,n)>\frac{r(n,0)}{2}+\frac{n(N-1)}{2(N-n)}-(\beta-\gamma)\frac{Nn(n-1)}{4c(N-n)}$.
		It has an evolutionary advantage over defection when $r(0,n)>\frac{(N-1)n}{N-n} -(\beta-\gamma)\frac{Nn(n-1)}{3c(N-n)}$,
		and over cooperation when $ r(0,n)>r(n,0) -(\beta-\gamma)\frac{Nn(n-1)}{6c(N-n)}$.
	\end{enumerate}
\end{theorem}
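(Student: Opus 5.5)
The plan is to read off every inequality directly from the weak-selection abundances in Eq.~\eqref{sta2}, which already encode Proposition~\ref{pro2}, and then apply Definition~\ref{def1} with $|\mathcal{S}|=3$. Throughout I assume $\delta>0$ small, so the sign of each comparison is the sign of the corresponding coefficient of $\delta$. To keep the algebra short I write
\[
K=\frac{N-n}{n}>0,\qquad B=(\beta-\gamma)\frac{N(n-1)}{2}>0,
\]
with $r_1=r(n,0)$ and $r_2=r(0,n)$. In this notation the $\delta$-coefficients of $9f_C$, $9f_D$ and $9f_P$ are
\[
c\bigl[(2r_1-r_2)K+1-N\bigr],\qquad -(r_1+r_2)Kc-2(1-N)c-B,\qquad c\bigl[(2r_2-r_1)K+1-N\bigr]+B .
\]

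For part (i), ``cooperation is favored'' means $f_C>1/3$, i.e. $(2r_1-r_2)K+1-N>0$. Dividing by $2K$ and using $(N-1)/K=n(N-1)/(N-n)$ gives $r_1>r_2/2+n(N-1)/(2(N-n))$. For $f_C>f_D$, I subtract the two coefficients. The $r_2$ terms cancel, leaving
\[
3cKr_1+3c(1-N)+B>0 .
\]
Dividing by $3cK$ gives $r_1>(N-1)n/(N-n)-(\beta-\gamma)Nn(n-1)/(6c(N-n))$. For $f_C>f_P$, the $1-N$ terms cancel and the difference is $3cK(r_1-r_2)-B>0$, which gives $r_1>r_2+(\beta-\gamma)Nn(n-1)/(6c(N-n))$.

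Part (ii) is symmetric. From $f_P>1/3$ I get $c[(2r_2-r_1)K+1-N]+B>0$; dividing by $2cK$ yields the stated bound with the $-(\beta-\gamma)Nn(n-1)/(4c(N-n))$ correction. For $f_P>f_D$, the difference is
\[
3cKr_2+3c(1-N)+2B>0,
\]
which gives the $1/(3c)$ correction term. The condition $f_P>f_C$ is just the negation of the last inequality of part (i) (strict inequality reversed), i.e. $r_2>r_1-(\beta-\gamma)Nn(n-1)/(6c(N-n))$.

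There is no real obstacle here: all the substantive work is in Proposition~\ref{pro2} and the expansion leading to Eq.~\eqref{sta2}. The only care needed is bookkeeping. One must track the signs when forming the differences $f_C-f_D$ and $f_P-f_D$, where $f_D$ enters with the negated coefficients. One must also divide by the positive quantities $K$ and $c$, which is what converts each inequality into the stated thresholds on $r(n,0)$ and $r(0,n)$ without flipping its direction.
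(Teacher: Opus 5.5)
Your proposal is correct and follows the paper's route exactly. The paper's proof is just that everything follows from Eq.~\eqref{sta2} with Definition~\ref{def1}, and your coefficient differences reproduce every threshold in the statement: $3cKr_1+3c(1-N)+B$ for $f_C-f_D$, $3cK(r_1-r_2)-B$ for $f_C-f_P$, and $3cKr_2+3c(1-N)+2B$ for $f_P-f_D$.
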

\begin{proof}
	The proof directly follows from Eq.~{\eqref{sta2}}.
\end{proof}

\begin{remark}	 Theorem~{\ref{them2}} indicates the evolutionary landscape shaped by the synergy between the feedback mechanism and punishment effect. In particular, the result of Theorem~{\ref{them1}} can be recovered from Theorem~{\ref{them2}} by setting $j_P\equiv 0$.
	\label{Xenun4}
\end{remark}

\begin{figure*}
	\centerline{\includegraphics[width=0.87\textwidth]{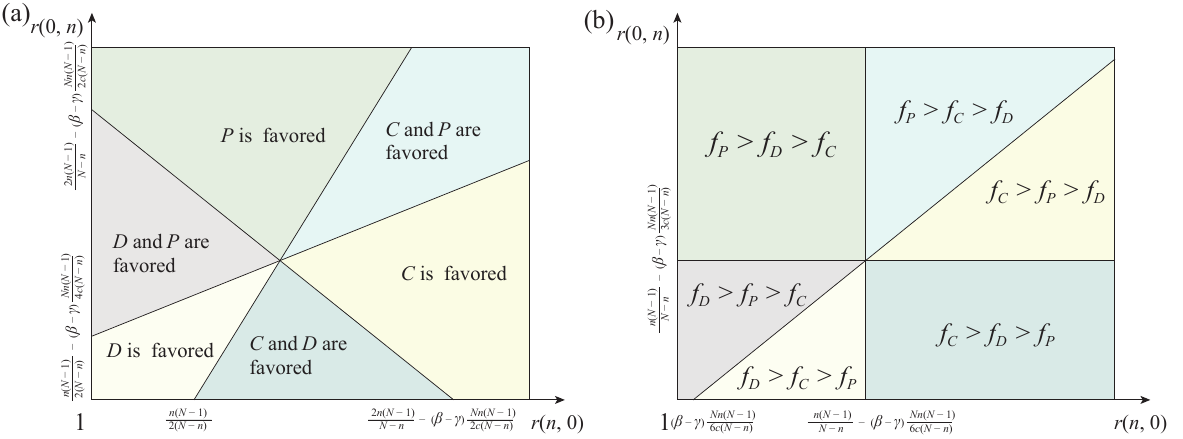}}
	\caption{Schematic of the evolutionary outcomes of strategies in the public goods game with punishment under the general frequency-dependent return.	Panel~(a) shows	the evolutionary diagram under which strategies are favored by natural selection, while panel~(b) presents the evolutionary diagram under which strategies have an evolutionary advantage over others.}
	\label{fig3}
\end{figure*}

According to Theorem~{\ref{them2}}, we know that the evolutionary outcome primarily depends on the investment return values $r(n, 0)$ in the full-cooperation state and $r(0,n)$ in the full-punishment state.
In order to help the reader to understand the evolutionary outcomes for different parameter regions more easily, we present an illustrative plot of the dynamical regimes in the parameter space ($r(n,0),r(0,n)$), as shown in  Fig.~{\ref{fig3}}. Specifically, when either $r(n,0)$ or $r(0,n)$ is small, the system tends toward a defection dominated state.
A large $r(n,0)$ promotes cooperation, whereas a large $r(0,n)$ favors the emergence of punishment. Furthermore, if both values $r(n,0)$ and $r(0,n)$ are high, cooperation and punishment can simultaneously be favored, as shown by the light blue region in Fig.~{\ref{fig3}}(a).

A key finding that cannot be observed in the  traditional model with fixed investment return  is that a high $r(0,n)$ can directly
counteract the negative effect induced by a high punishment cost on the evolution of punishment. This is clearly reflected in the condition for punishment to be favored: $r(0,n)> \frac{r(n,0)}{2} +\frac{n(N-1)}{2(N-n)} - (\beta - \gamma) \frac{Nn(n-1)}{4c(N-n)}$.
The condition includes a negative term related to the effective cost of punishment $(\beta - \gamma)$. Thus, even if the intrinsic punishment cost $\gamma$ is high, a sufficiently large return value $r(0,n)$ can still make the inequality hold and promote the evolution of punishment.
This indicates that in the public goods game with punishment incorporating a frequency-dependent multiplication factor, punishment can evolve through two distinct pathways: one relies on a high fine $\beta$ to deter defectors, and the other leverages a high investment return in the full-punishment state of the group. This finding provides a deeper understanding of how costly punishment can emerge and be sustained.

We next consider a special case where the general frequency-dependent
multiplication factor depends solely on the number of contributors including cooperators and punishers in the group, and we have $r=r(j_C,j_P)=r(j_{C}+j_{P})$. Under this assumption, we thus have  $r_n\triangleq r(n,0)=r(0,n)$ and obtain the following corollary.

\begin{corollary}\label{cor1}
	Assume that the general frequency-dependent
	multiplication factor depends solely on the number of contributors in the group. Then the following results hold:
	\begin{enumerate}
		\item[(i)]
		Cooperation is favored if $r_n>\frac{(N-1)n}{N-n}$. It has an evolutionary advantage over defection when $r_n>\frac{(N-1)n}{N-n}-(\beta-\gamma)\frac{Nn(n-1)}{6c(N-n)}$.
		\item[(ii)]
		Punishment is favored if $r_n>\frac{(N-1)n}{N-n}-(\beta-\gamma)\frac{Nn(n-1)}{2c(N-n)}$. It has an evolutionary advantage over defection when $r_n>\frac{(N-1)n}{N-n}-(\beta-\gamma)\frac{Nn(n-1)}{3c(N-n)}$, and over cooperation for any $r_n$.
	\end{enumerate}
\end{corollary}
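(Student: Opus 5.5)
The plan is to specialize Theorem~\ref{them2} (equivalently, Eq.~\eqref{sta2}) to the case $r(j_C,j_P)=r(j_C+j_P)$. No new fixation-probability or payoff computations are needed, because Proposition~\ref{pro2} already shows that the three cumulative payoff differences depend on $r$ only through the two boundary values $r(n,0)$ and $r(0,n)$. Under the assumption of the corollary, both boundary states have exactly $n$ contributors, so $r(n,0)=r(n)=r(0,n)$. I would therefore substitute $r(n,0)=r(0,n)=r_n$ into each of the six inequalities of Theorem~\ref{them2} and simplify. This is legitimate because Theorem~\ref{them2} holds for an arbitrary bivariate $r$, and a function of $j_C+j_P$ is a special case.

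For part~(i), the ``favored'' condition becomes $r_n>\frac{r_n}{2}+\frac{n(N-1)}{2(N-n)}$. Multiplying by $2$ and cancelling gives $r_n>\frac{(N-1)n}{N-n}$, the same threshold as in Theorem~\ref{them1}. The condition for an advantage over defection does not involve $r(0,n)$ at all, so it carries over verbatim with $r(n,0)$ replaced by $r_n$.

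For part~(ii), the ``favored'' condition reads
\[
r_n>\frac{r_n}{2}+\frac{n(N-1)}{2(N-n)}-(\beta-\gamma)\frac{Nn(n-1)}{4c(N-n)}.
\]
Multiplying by $2$ gives the stated threshold, with coefficient $\frac{Nn(n-1)}{2c(N-n)}$ on $(\beta-\gamma)$. The condition for an advantage of punishment over defection involves only $r(0,n)$, so it carries over directly. The condition for an advantage of punishment over cooperation becomes $r_n>r_n-(\beta-\gamma)\frac{Nn(n-1)}{6c(N-n)}$. Since $\beta>\gamma>0$, $c>0$, $n\ge2$ and $N>n$, the subtracted term is strictly positive, so this holds for every value of $r_n$.

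As a sanity check, I would also read the conclusions directly off Eq.~\eqref{sta2} with $r(n,0)=r(0,n)=r_n$. For example, $f_P-f_C=\frac{1}{9}(\beta-\gamma)\frac{N(n-1)}{2}\delta>0$, which confirms the ``for any $r_n$'' claim independently of the rearranged inequalities. The only step requiring care is the sign bookkeeping when doubling inequalities and checking that no division by a possibly negative quantity occurs; here all divisors ($2$, $c$, $N-n$) are positive, so I expect no genuine obstacle.
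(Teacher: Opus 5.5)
Your proposal is correct and is essentially the paper's own argument. The paper obtains Corollary~\ref{cor1} by noting that $r(n,0)=r(0,n)=r_n$ when $r$ depends only on $j_C+j_P$, and substituting this into Theorem~\ref{them2} (equivalently Eq.~\eqref{sta2}); your simplification of each condition, including $f_P-f_C=\frac{1}{9}(\beta-\gamma)\frac{N(n-1)}{2}\delta>0$ for the ``any $r_n$'' claim, is accurate.
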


\begin{remark}
	Corollary~{\ref{cor1}} shows that when returns depend equally on the frequencies of cooperators and punishers, punishment exhibits an inherent advantage over cooperation.
	
	Surprisingly, the critical condition for cooperation to be favored, $r_n > n(N-1)/(N-n)$, is identical to the condition in the public goods game without punishment (see Theorem~{\ref{them1}}). This means that, in a system with this special feedback, the introduction of punishment does not alter the condition for cooperation being favored by selection.
	\label{Xenun5}
\end{remark}

\subsection{Individual-based simulations}\label{sec33}
To verify our theoretical results, we first consider the following linear frequency-dependent function, given by

\begin{figure*}
	\centerline{\includegraphics[width=0.8\textwidth]{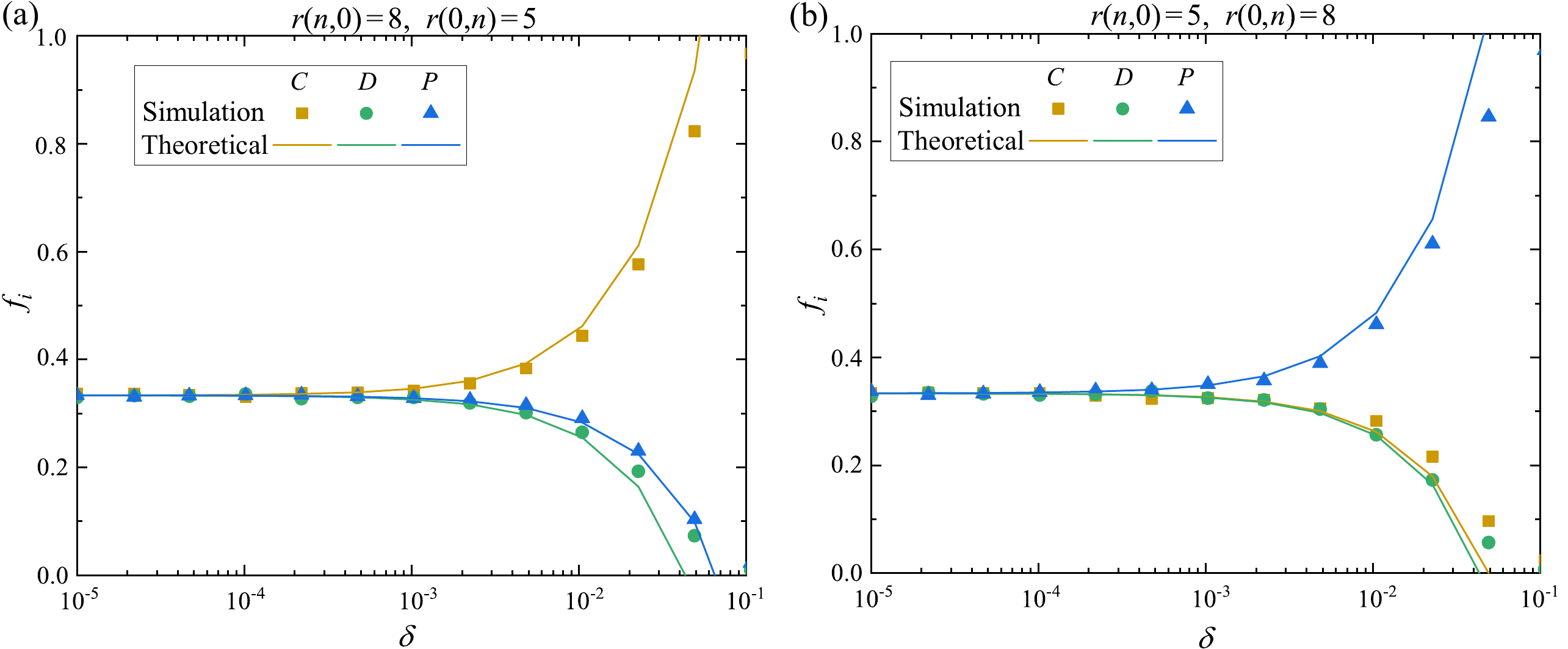}}
	\caption{Average abundance of strategies as a function of the intensity of selection in the public goods games with punishment under the frequency-dependent return function~\eqref{r11}.
		Lines represent theoretical predictions derived from Eq.~{\eqref{sta2}}. Symbols denote individual-based simulation  results, where each data point is achieved by averaging over $15$ independent realizations of initial conditions and each player takes $10^8$ updates on average in each realization.
		Parameters: $N=100,n=5,\beta=0.1,\gamma=0.01,\mu=0.001$, and $r_0=2$.}
	\label{fig4}
\end{figure*}
\begin{figure*}
	\centerline{\includegraphics[width=0.8\textwidth]{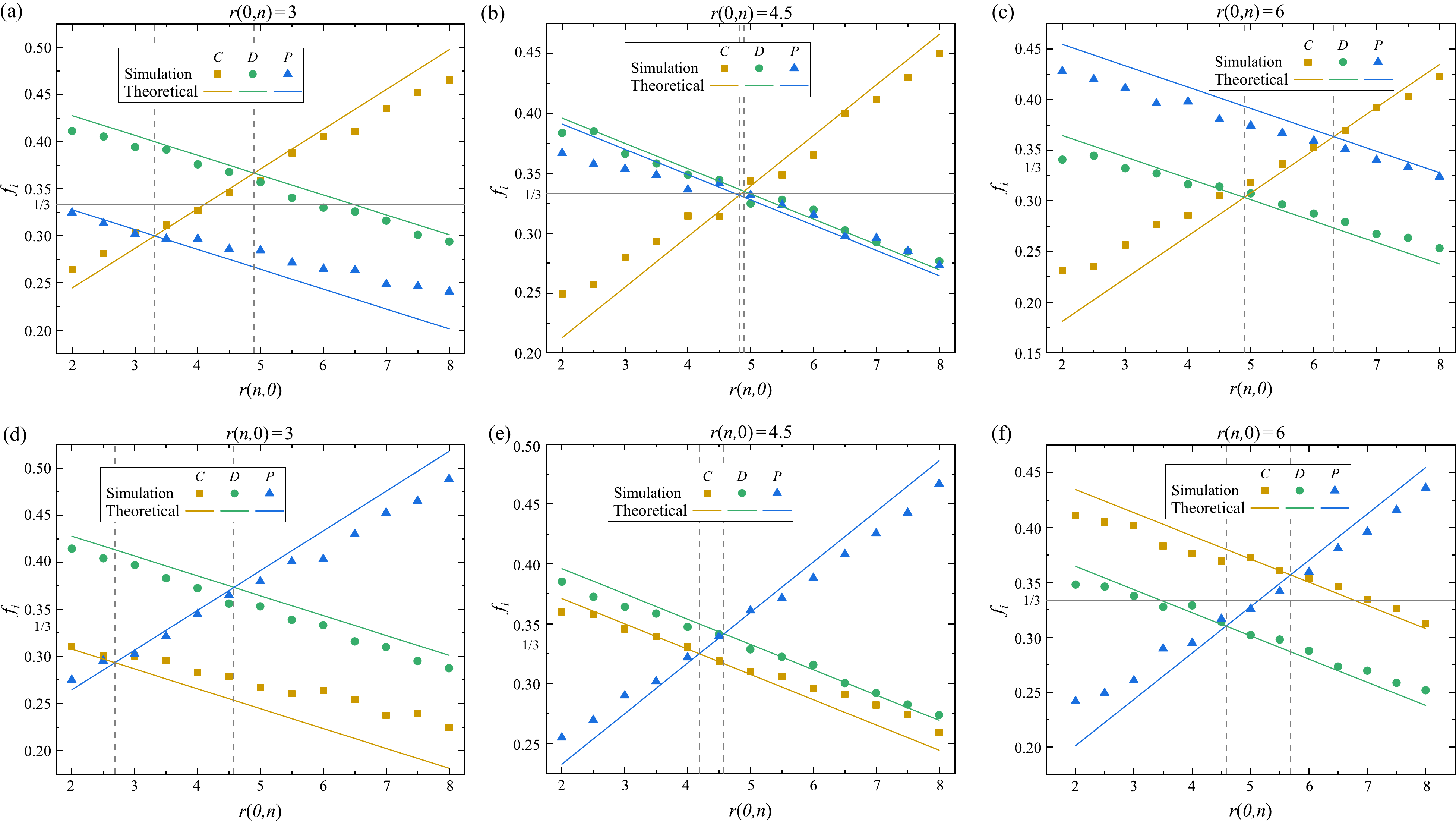}}
\caption{Average abundance of strategies in the public goods game with punishment as a function of the return values $r(n,0)$ (first row) and $r(0,n)$ (second row).Symbols represent the simulated results for the average frequencies of strategies. Colored solid lines show the corresponding theoretical predictions. Black dashed lines indicate the theoretical threshold value at which each strategy exhibits identical long-run abundance, derived in Theorem~{\ref{them2}}. The simulated results are averaged over $15$ independent realizations and each player takes  $10^8$ updates on average  in each realization. Parameters: $N=100,n=5,\delta=0.01,\beta=0.1,\gamma=0.01,\mu=0.001$, and $ r_0=2$.}
	\label{fig5}\vspace{4pt}
\end{figure*}

\begin{figure}
	\centerline{\includegraphics[width=0.4\textwidth]{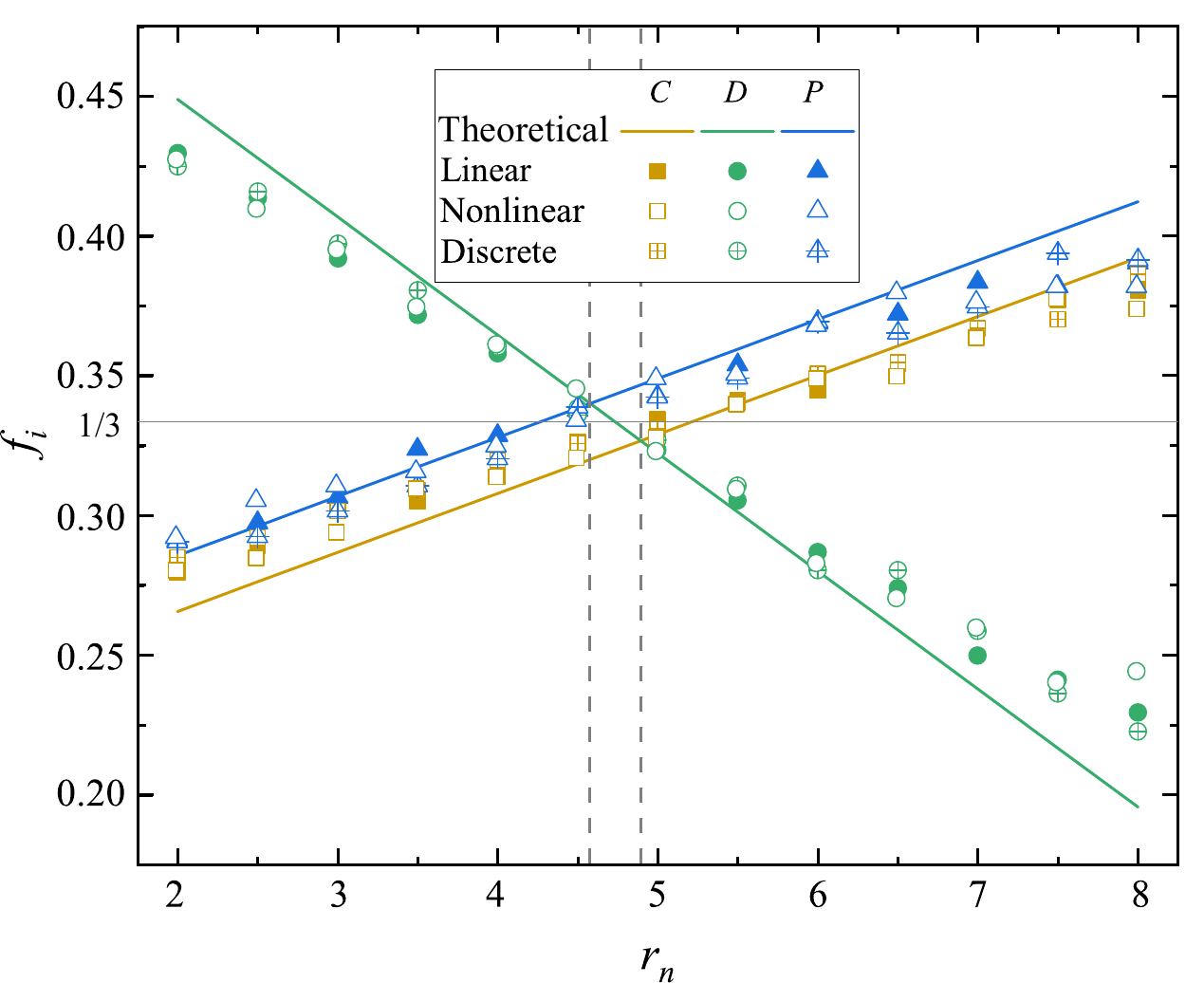}}
	\caption{Average abundance of strategies as a function of the return value $r_n$ in the local full-contribution state, shown for three frequency-dependent functions: the linear function~\eqref{r2}, the nonlinear function~\eqref{r222}, and the piecewise function~\eqref{r01} with $r_1=r_0+k$.Symbols represent the simulated results for the average frequencies of the three strategies.	Colored solid lines show the corresponding theoretical predictions. Black dashed lines indicate the theoretical threshold value at which each strategy exhibits identical long-run abundance, derived in Corollary~{\ref{cor1}}. The simulated results are averaged over $15$ independent realizations of initial conditions and each player takes $10^8$ updates on average in each realization. Parameters: $N=100,n=5,m=5,\delta=0.01,\beta=0.1,\gamma=0.01,\mu=0.001, r_0=2$, and $r_2=1.1$.}
	\label{fig6}
\end{figure}

\begin{equation}\label{r11}
	r(j_C,j_P) = r_0 + k_1\frac{j_C}{n}+k_2\frac{j_P}{n},
\end{equation}
where $r_{0}>1$ is the baseline multiplication factor. $k_1$ and $k_2$ represent the feedback strengths of cooperators and punishers, respectively. Here, we have $r(n,0)=r_0+k_1$ and $r(0,n)=r_0+k_2$.
 Fig.~{\ref{fig4}} shows the average abundances of strategies in dependence on the intensity of selection under this frequency-dependent multiplication factor. We can observe that the simulated results align closely with the theoretical predictions, even when the selection is not extremely weak. Moreover, when $r(n,0)$ is larger than $r(0,n)$, the average frequency of cooperators exceeds that of punishers and defectors (Fig.~{\ref{fig4}}(a)). Otherwise, punishers become dominant while cooperators and defectors are less abundant ( Fig.~{\ref{fig4}}(b)). This confirms that the relative magnitudes of $r(n,0)$ and $r(0,n)$ crucially determine the long-term evolution of strategies.

To further clarify the decisive role of boundary returns $r(n,0)$ and $r(0,n)$ in shaping the evolutionary outcomes, we show how the average frequencies of strategies  vary with increasing $r(n,0)$ or $r(0,n)$ in Fig.~{\ref{fig5}}. We can see that the simulated results align with the theoretical prediction obtained by Eq.~{\eqref{sta2}}. Besides, when the frequency-dependent feedback strength is weak, leading to a smaller $r(n,0)$ or $r(0,n)$, both cooperation and punishment are at a disadvantage compared to\break defection, as seen in Fig.~{\ref{fig5}}(a) and (d). As the frequency-dependent feedback strength becomes strong, the average frequencies of both cooperation and punishment increase significantly. Furthermore, when the return value in either the full-cooperation state or the full-punishment state is sufficiently high, cooperation or punishment can become the dominant strategy (top and bottom rows of Fig.~{\ref{fig5}}). This demonstrates that considering frequency-dependent return can mitigate or even reverse the negative effects induced by cooperation and punishment costs, thereby promoting the evolution of both cooperative and punishing\break strategies.

To verify the robustness of these results, we further consider the average frequencies of strategies under various forms of return function. For simplicity, we focus on three function forms: a linear function given by
\begin{equation}\label{r2}
	r(j_C,j_P) = r_{0} + k\frac{j_{C} + j_{P}}{n};
\end{equation}
a nonlinear function defined as
\begin{equation}\label{r222}
	r(j_C,j_P) = r_{0} + \frac{k}{\exp(\frac{j_{C} + j_{P}}{n}-1)};
\end{equation}
and a piecewise function given by
\begin{equation}\label{r01}
	r(j_C,j_P) =
	\begin{cases}
		r_1,  & {\rm if}\ {j_C+j_P\ge m}, \\
		r_2, &  {\rm otherwise},
	\end{cases}
\end{equation}
with $r_1,r_2>1,m\le n$, and $m\in\mathbb{Z}$. Note that under these three functions, the feedback-based return only depends on the number of contributors including cooperators and punishers in the group and we have $r_n\triangleq r(n,0)=r(0,n)$. As shown in Fig.~{\ref{fig6}}, we present the average abundance of each strategy as a function of $r_n$ in the full-contribution state for these three return functions. Despite their distinct mathematical expressions, all functions yield highly consistent strategy abundances because they share the same value of $r_n$. A higher $r_n$ simultaneously promotes the evolution of both cooperative and punishing behaviors. Once $r_n$  exceeds a critical threshold, the average abundances of cooperators and punishers become significantly higher than that of defectors. These results confirm that our conclusions remain robust across different forms of return function.

\section{Conclusion and discussion}\label{Xsec8}\label{Xsec6}

In this work, we have introduced the general frequency-dependent investment returns into the public goods game, where the multiplication factor dynamically changes with the strategic composition of the game group. By means of theoretical analysis, we have derived mathematical conditions for the evolution of cooperation in the limit of weak selection. We have shown that whether cooperation can be favored is primarily determined by the investment return value in the full-cooperation state of the game group. Specifically, a higher return value in such a state consistently promotes the average abundance of cooperators in the long run. Furthermore, we have introduced the general frequency-dependent return into public goods games with peer punishment. We have found that the average abundance of strategies is shaped by the returns in both the full-cooperation and full-punishment states of the group.  Specifically, a higher return rate in the full-cooperation state enhances the evolutionary advantage of cooperation, while a higher return value in the full-punishment state favors the dominance of punishment. Importantly, all analytical predictions have been supported by individual-based simulations. This study elucidates the conditions to promote cooperation and alleviate the second-order free-riding problem under general frequency-dependent returns.

In this paper we focus on how frequency-dependent returns alleviate the second-order free-riding problem and promote cooperation when the peer punishment is considered, but it is also important to consider the explicit goal of optimizing social welfare. As shown in Refs{.} \cite{HanJRSI2024,HanPLR2025}, optimized social welfare does not always align with maximal cooperation due to the hidden costs introduced by punishment.
Therefore, within the framework of general frequency-dependent returns, social welfare could serve as a key metric for future work.
Moreover, in this work we only consider peer punishment.  However, many other sanction mechanisms~\cite{RandNC2011,SigmundPNAS2001,Han2016}, such as institutional punishment and peer/institutional reward, have received widespread attention in the study of social dilemmas.  Investigating whether, and if so how, boundary returns play a decisive role under these alternative incentive mechanisms is another highly promising research direction.

Last, we note that the present work is conducted under the assumptions of a well-mixed population and weak selection.
However, in real-world scenarios, populations are often structured, with interactions typically limited to local neighbors~\cite{AllenNature2017,TarnitaPNAS2011,TanPD2024,PercNJP2012}. Moreover, selection may operate well beyond the limit of weak selection explored here.
A natural and important extension would therefore be to study the general frequency-dependent return in structured populations, where the return depends on the strategic configuration within an individual's local neighborhood, and to examine whether the result remains robust under stronger selection.  Such extensions would significantly enhance the practical relevance and predictive power of the theoretical framework developed here, and we also intend to pursue them in future\break work.

	\section*{Declaration of competing interest}
	The authors declare that they have no known competing financial
	interests or personal relationships that could have appeared to
	influence the work reported in this paper.

	\section*{Acknowledgment}
	This research was supported by the National Natural Science Foundation of China under Grant No. 62473081 and by the National Research, Development and Innovation Office (NKFIH), Hungary under Grant No. K142948.

\section*{Data availability}
No data was used for the research described in the article.

\appendix

\section{\label{suppA}Derivation of some important identities}\nonumber We here show how the following important identities are derived by using \textit{Hockey-stick identity} and its extension: $\sum_{m=r}^{M}\binom{m}{r}=\binom{M+1}{r+1}$  and $\sum_{m=r}^{M}\binom{m}{r}\binom{M-m}{s}=\binom{M+1}{r+s+1}$
to obtain the theoretical results shown in the main text.
Note that $\sum_{k=s}^{t}f(k)=0$ if $t<s$.

Assume that $k,s,n,N\in \mathbb{Z}$, $1\le n\le s\le N-1$,  and $k\le n-1$. Then, using the key identities $\sum_{m=r}^{M}\binom{m}{r}=\binom{M+1}{r+1}$ and $\sum_{m=r}^{M}\binom{m}{r}\binom{M-m}{s}=\binom{M+1}{r+s+1}$, we have
\begin{enumerate}[(i)]
	\item
	\begin{equation*}
		\begin{split}
			&\sum_{s=1}^{N-1}\frac{\binom{s-1}{n-1}}{\binom{N-1}{n-1}}
			=\frac{1}{\binom{N-1}{n-1}}\sum_{s=n}^{N-1}\binom{s-1}{n-1}\\
			=&\frac{1}{\binom{N-1}{n-1}}\sum_{t=n-1}^{N-2}\binom{t}{n-1}
			=\frac{\binom{N-1}{n}}{\binom{N-1}{n-1}}=\frac{N-n}{n}.
		\end{split}
	\end{equation*}
	
	\item 
	When $k\ne0$,
	\begin{equation*}
		\begin{split}
			&\sum_{s=1}^{N-1}\frac{\binom{s}{k}\binom{N-s-1}{n-k-1}}{\binom{N-1}{n-1}}
			=\frac{1}{\binom{N-1}{n-1}}\sum_{s=k}^{N-1}\binom{s}{k}\binom{N-s-1}{n-k-1}\\
			=&\frac{\binom{N}{n}}{\binom{N-1}{n-1}}=\frac{N}{n}.
		\end{split}
	\end{equation*}
	When $k=0$,
	\begin{equation*}
		\begin{split}
			&\sum_{s=1}^{N-1}\frac{\binom{s}{k}\binom{N-s-1}{n-k-1}}{\binom{N-1}{n-1}}
			=\sum_{s=1}^{N-1}\frac{\binom{N-s-1}{n-1}}{\binom{N-1}{n-1}}\\
			=&\sum_{t=1}^{N-1}\frac{\binom{t-1}{n-1}}{\binom{N-1}{n-1}}
			=\frac{N-n}{n}.
		\end{split}
	\end{equation*}
	Similarly, 
	\begin{equation*}
		\sum_{s=1}^{N-1}\frac{\binom{s-1}{k}\binom{N-s}{n-k-1}}{\binom{N-1}{n-1}}
		=\begin{cases}
			\frac{N}{n}, &  {k\ne n-1}, \\
			\frac{N-n}{n},  & {k=n-1}.
		\end{cases}
	\end{equation*}
	
	\item 
	When $k\ne0$, 
	\begin{equation*}
		\begin{split}
			&\sum_{s=1}^{N-1}\frac{\binom{s}{k}\binom{N-s-1}{n-k-1}}{\binom{N-1}{n-1}}\frac{1}{s}\\
			=&
			\frac{1}{k\binom{N-1}{n-1}}\sum_{s=1}^{N-1}\binom{s-1}{k-1}\binom{N-s-1}{n-k-1}\\
			=&
			\frac{1}{k\binom{N-1}{n-1}}\sum_{t=0}^{N-2}\binom{t}{k-1}\binom{N-t-2}{n-k-1}\\
			=&
			\frac{1}{k\binom{N-1}{n-1}}\binom{N-1}{n-1}=\frac{1}{k}.
		\end{split}
	\end{equation*}
	
	\item
	\begin{equation*}
		\begin{split}
			&\sum_{s=1}^{N-1}\sum_{k=0}^{n-1}\frac{\binom{s}{k}\binom{N-s-1}{n-k-1}}{\binom{N-1}{n-1}}k\\
			=&\frac{1}{\binom{N-1}{n-1}} \sum_{s=1}^{N-1}\sum_{k=1}^{n-1}k \binom{s}{k}\binom{N-s-1}{n-k-1}\\
			=&\frac{1}{\binom{N-1}{n-1}} \sum_{k=1}^{n-1}k\binom{N}{n}
			=\frac{N(n-1)}{2}.
		\end{split}
	\end{equation*}
	Furthermore,
	\begin{equation*}
		\begin{split}
			&\sum_{s=1}^{N-1}\sum_{k=0}^{n-1}\frac{\binom{s-1}{k}\binom{N-s}{n-k-1}}{\binom{N-1}{n-1}}k
			=
			\sum_{k=0}^{n-1}\sum_{s=0}^{N-2}\frac{\binom{s}{k}\binom{N-s-1}{n-k-1}}{\binom{N-1}{n-1}}k\\
			=&
			\sum_{s=1}^{N-1}\sum_{k=0}^{n-1}\frac{\binom{s}{k}\binom{N-s-1}{n-k-1}}{\binom{N-1}{n-1}}k -(n-1)=\frac{(N-2)(n-1)}{2}.
		\end{split}
	\end{equation*}
	
	\item
	\begin{equation*}
		\begin{split}
			&\sum_{s=1}^{N-1}\sum_{k=1}^{n-1}\frac{\binom{s}{k}\binom{N-s-1}{n-k-1}}{\binom{N-1}{n-1}}\frac{k(N-s)}{s(n-k)}\\
			=&\frac{1}{\binom{N-1}{n-1}}\sum_{s=1}^{N-1}\sum_{k=1}^{n-1}\binom{s-1}{k-1}\binom{N-s}{n-k}\\
			=&\frac{1}{\binom{N-1}{n-1}}\sum_{k=0}^{n-2}\sum_{s=0}^{N-2}\binom{s}{k}\binom{N-s-1}{n-k-1}\\
			=&
			\frac{1}{\binom{N-1}{n-1}}\sum_{k=0}^{n-2}\sum_{s=k}^{N}\binom{s}{k}\binom{N-s-1}{n-k-1}\\
			=&\frac{1}{\binom{N-1}{n-1}}\sum_{k=0}^{n-2}\binom{N}{n}
			=\frac{N(n-1)}{n}.
		\end{split}
	\end{equation*}
\end{enumerate}

\section{\label{suppB}Proof of Proposition \ref{pro1}}
We substitute the respective payoffs~\eqref{average1} and~\eqref{average2} into $\sum_{N_C=1}^{N-1}(P_{CD}(N_C)$ $-P_{DC}$ $(N_C))$, and obtain 
\begin{align*}
		&\sum_{N_C=1}^{N-1}(P_{CD}(N_C)-P_{DC}(N_C))\\
		=&\sum_{N_C=1}^{N-1}\sum_{n_C=0}^{n-1}\frac{\binom{N_C-1}{n_C}\binom{N-N_C}{n-n_C-1}}{\binom{N-1}{n-1}} \Pi_C(n_C+1,n-n_C-1)
		\\&
		- 
		\sum_{N_C=1}^{N-1}\sum_{n_C=0}^{n-1} \frac{\binom{N_C}{n_C}\binom{N-N_C-1}{n-n_C-1}}{\binom{N-1}{n-1}}\Pi_D(n_C,n-n_C)\\
		=&
		\sum_{N_C=1}^{N-1}\sum_{n_C=0}^{n-1}\frac{\binom{N_C-1}{n_C}\binom{N-N_C}{n-n_C-1}}{\binom{N-1}{n-1}}\left(\frac{r(n_C+1)(n_C+1)}{n}c\right. \\&\left.
		-c\vphantom{\sum_{N_C=1}^{N-1}} \right) 
		-
		\sum_{N_C=1}^{N-1}\sum_{n_C=0}^{n-1} \frac{\binom{N_C}{n_C}\binom{N-N_C-1}{n-n_C-1}}{\binom{N-1}{n-1}}\left(\frac{r(n_C)n_C}{n}c \right) \\
		=&
		\sum_{N_C=1}^{N-1}\left( \frac{\binom{N_C-1}{n-1}\binom{N-N_C}{0}}{\binom{N-1}{n-1}}\left(\frac{r(n)n}{n}c-c \right)\right. \\&\left.
		+ \sum_{k=1}^{n-1}\frac{\binom{N_C-1}{k-1}\binom{N-N_C}{n-k}}{\binom{N-1}{n-1}}\left(\frac{r(k)k}{n}c-c \right)
		\right)\\& 
		-
		\sum_{N_C=1}^{N-1}\sum_{n_C=0}^{n-1} \frac{\binom{N_C}{n_C}\binom{N-N_C-1}{n-n_C-1}}{\binom{N-1}{n-1}}\left(\frac{r(n_C)n_C}{n}c \right) \\
		=&
		\sum_{N_C=1}^{N-1} \frac{\binom{N_C-1}{n-1}}{\binom{N-1}{n-1}}\left(r(n)c-c \right) 
		+ \sum_{N_C=1}^{N-1} \sum_{k=1}^{n-1}\frac{\binom{N_C}{k}\binom{N-N_C-1}{n-k-1}}{\binom{N-1}{n-1}}\\&
		\cdot\left(\left( \frac{k(N-N_C)}{N_C(n-k)}-1\right)
		\frac{r(k)k}{n}c \right)\\
		&-
		\sum_{N_C=1}^{N-1}\sum_{k=1}^{n-1}\frac{\binom{N_C}{k}\binom{N-N_C-1}{n-k-1}}{\binom{N-1}{n-1}}\frac{k(N-N_C)}{N_C(n-k)}c\\
		=&
		\left(r(n)c-c \right)\sum_{N_C=1}^{N-1} \frac{\binom{N_C-1}{n-1}}{\binom{N-1}{n-1}}-
		c\sum_{N_C=1}^{N-1}\sum_{k=1}^{n-1}\frac{\binom{N_C}{k}\binom{N-N_C-1}{n-k-1}}{\binom{N-1}{n-1}}\\&
		\cdot \frac{k(N-N_C)}{N_C(n-k)}
		+
		\frac{c}{n}\sum_{N_C=1}^{N-1} \sum_{k=1}^{n-1}\frac{\binom{N_C}{k}\binom{N-N_C-1}{n-k-1}}{\binom{N-1}{n-1}}\\
		&\cdot\left(\left( \frac{k(N-N_C)}{N_C(n-k)}-1\right) r(k)k\right).
\end{align*}

Based on the identities derived in Appendix~\ref{suppA}, we have
\begin{equation*}
	\begin{split}
		&\sum_{N_C=1}^{N-1} \sum_{k=1}^{n-1}\frac{\binom{N_C}{k}\binom{N-N_C-1}{n-k-1}}{\binom{N-1}{n-1}}\left(\left( \frac{k(N-N_C)}{N_C(n-k)}-1\right)r(k)k \right)\\
		=&
		\sum_{k=1}^{n-1}  r(k)k \frac{k}{n-k}\left( N \sum_{N_C=1}^{N-1} \frac{\binom{N_C}{k}\binom{N-N_C-1}{n-k-1}}{\binom{N-1}{n-1}}\frac{1}{N_C}
		\right. \\&\left.
		-\sum_{N_C=1}^{N-1} \frac{\binom{N_C}{k}\binom{N-N_C-1}{n-k-1}}{\binom{N-1}{n-1}}\right) 
		- \sum_{k=1}^{n-1}  r(k)k\sum_{N_C=1}^{N-1} \frac{\binom{N_C}{k}\binom{N-N_C-1}{n-k-1}}{\binom{N-1}{n-1}}\\
		=&
		\sum_{k=1}^{n-1}  r(k)k \frac{k}{n-k}\left( \frac{N }{k}-\frac{N}{n}\right)- \frac{N}{n}\sum_{k=1}^{n-1}  r(k)k = 0.
	\end{split}
\end{equation*}
We therefore obtain
\begin{equation*}
	\sum_{N_C=1}^{N-1}\left( P_{CD}(N_C)-P_{DC}(N_C)\right)
	=\left( r(n)\frac{N-n}{n}+1-N\right)c.  
\end{equation*}
This completes the proof of Proposition~\ref{pro1}.

\section{\label{suppC}Proof of Proposition \ref{pro2}}
Based on the result obtained in Appendix~\ref{suppB}, one can obtain
\begin{equation*}
	\sum_{N_C=1}^{N-1}\left( P_{CD}(N_C)-P_{DC}(N_C)\right)
	=\left( r(n,0)\frac{N-n}{n}+1-N\right)c.
\end{equation*}
The proofs of Eq.~\eqref{PCD1} are completed.

According to the identities presented in Appendix~\ref{suppA}, we can obtain
\begin{align*}
		&\sum_{N_C=1}^{N-1}\left( P_{CP}(N_C)-P_{PC}(N_C)\right)\\
		=&\sum_{N_C=1}^{N-1}\sum_{n_C=0}^{n-1}\frac{\binom{N_C-1}{n_C}\binom{N-N_C}{n-n_C-1}}{\binom{N-1}{n-1}} \Pi_C(n_C+1,0,n-n_C-1) \\&
		- 
		\sum_{N_C=1}^{N-1}\sum_{n_C=0}^{n-1} \frac{\binom{N_C}{n_C}\binom{N-N_C-1}{n-n_C-1}}{\binom{N-1}{n-1}}\Pi_P(n_C,0,n-n_C)\\
		=&
		\sum_{N_C=1}^{N-1}\sum_{n_C=0}^{n-1}\frac{\binom{N_C-1}{n_C}\binom{N-N_C}{n-n_C-1}}{\binom{N-1}{n-1}}\left(r(n_C+1,n-n_C-1)c\right. \\&\left.
		-c\right) 
		-
		\sum_{N_C=1}^{N-1}\sum_{n_C=0}^{n-1} \frac{\binom{N_C}{n_C}\binom{N-N_C-1}{n-n_C-1}}{\binom{N-1}{n-1}}\left(r(n_C,n-n_C)c -c\right) \\
		=&
		\sum_{n_C=0}^{n-1}\left(r(n_C+1,n-n_C-1)c-c\right)\\& \cdot \sum_{N_C=1}^{N-1}\frac{\binom{N_C-1}{n_C}\binom{N-N_C}{n-n_C-1}}{\binom{N-1}{n-1}} 
		-\sum_{n_C=0}^{n-1}\left(r(n_C,n-n_C)c -c\right)\\& \cdot
		\sum_{N_C=1}^{N-1} \frac{\binom{N_C}{n_C}\binom{N-N_C-1}{n-n_C-1}}{\binom{N-1}{n-1}} \\
		=&
		\sum_{n_C=0}^{n-2}\left(r(n_C+1,n-n_C-1)c-c\right)\frac{N}{n}+(r(n,0)c\\
		&-c)\frac{N-n}{n} 
		-\sum_{n_C=1}^{n-1}\left(r(n_C,n-n_C)c -c\right)\frac{N}{n} \\&-\left(r(0,n)c-c\right)\frac{N-n}{n}\\
		=&
		\sum_{n_C=1}^{n-1}\left(r(n_C,n-n_C)c-c\right)\frac{N}{n}+(r(n,0)c-c)\\&
		\cdot\frac{N-n}{n} 
		-\sum_{n_C=1}^{n-1}\left(r(n_C,n-n_C)c -c\right)\frac{N}{n} \\&-(r(0,n)c-c)\frac{N-n}{n}\\
		=&\frac{N-n}{n}c(r(n,0)-r(0,n)).
\end{align*}
Then the proof of Eq.~\eqref{PCD2} is completed.

Using Eq.~\eqref{PCD1}, we have
\begin{align*}
		&\sum_{N_P=1}^{N-1}\left( P_{PD}(N_P)-P_{DP}(N_P)\right)\\
		=&\sum_{N_P=1}^{N-1}\sum_{n_P=0}^{n-1}\frac{\binom{N_P-1}{n_P}\binom{N-N_P}{n-n_P-1}}{\binom{N-1}{n-1}} \Pi_P(0,n-n_P-1,n_P+1) \\	&
		- 
		\sum_{N_P=1}^{N-1}\sum_{n_P=0}^{n-1} \frac{\binom{N_P}{n_P}\binom{N-N_P-1}{n-n_P-1}}{\binom{N-1}{n-1}}\Pi_D(0,n-n_P,n_P)\\
		=&
		\sum_{N_P=1}^{N-1}\sum_{n_P=0}^{n-1}\frac{\binom{N_P-1}{n_P}\binom{N-N_P}{n-n_P-1}}{\binom{N-1}{n-1}}\left(\frac{r(0,n_P+1)(n_P+1)}{n}c\right. \\&\left.
		-c -\gamma(n-n_P-1)\vphantom{\sum_{N_C=1}^{N-1}}\right) 
		-
		\sum_{N_P=1}^{N-1}\sum_{n_P=0}^{n-1} \frac{\binom{N_P}{n_P}\binom{N-N_P-1}{n-n_P-1}}{\binom{N-1}{n-1}}\\
		&\cdot\left(\frac{r(0,n_P)n_P}{n}c -\beta n_P\right) \\
		=&
		\left( r(0,n)\frac{N-n}{n}+1-N\right)c\\&-\gamma\sum_{N_P=1}^{N-1}\sum_{n_P=0}^{n-1}\frac{\binom{N_P-1}{n_P}\binom{N-N_P}{n-n_P-1}}{\binom{N-1}{n-1}}
		\left(n-n_P-1\right) 
		\\&+\beta
		\sum_{N_P=1}^{N-1}\sum_{n_P=0}^{n-1} \frac{\binom{N_P}{n_P}\binom{N-N_P-1}{n-n_P-1}}{\binom{N-1}{n-1}}n_P \\
		=&
		\left( r(0,n)\frac{N-n}{n}+1-N\right)c
		-\gamma(n-1)\\&
		\cdot\sum_{N_P=1}^{N-1}\sum_{n_P=0}^{n-1}\frac{\binom{N_P-1}{n_P}\binom{N-N_P}{n-n_P-1}}{\binom{N-1}{n-1}}\\& 	
		+\gamma \sum_{N_P=1}^{N-1}\sum_{n_P=0}^{n-1}\frac{\binom{N_P-1}{n_P}\binom{N-N_P}{n-n_P-1}}{\binom{N-1}{n-1}}n_P\\&
		+\beta
		\sum_{N_P=1}^{N-1}\sum_{n_P=0}^{n-1} \frac{\binom{N_P}{n_P}\binom{N-N_P-1}{n-n_P-1}}{\binom{N-1}{n-1}}n_P\\
		=&
		\left( r(0,n)\frac{N-n}{n}+1-N\right)c
		-\gamma(n-1)\left( \frac{N-n}{n}\right. \\&\left.
		+\sum_{n_P=0}^{n-2}\frac{N}{n}\right)
		+\gamma\frac{(N-2)(n-1)}{2}+\beta\frac{N(n-1)}{2} \\
		=&
		c\left( r(0,n)\frac{N-n}{n}+1-N\right)+(\beta-\gamma)\frac{N(n-1)}{2}.
\end{align*}
Then the proof of Eq.~\eqref{PCD3} is completed.

\end{document}